\newtheorem{theorem}{Theorem}[section]
\newtheorem{lemma}{Lemma}[section]
\newtheorem{proposition}{Proposition}[section]
\newtheorem{remark}{Remark}[section]
\newcommand{\qed}{\hfill $\Box$ \bigbreak}
\newenvironment{proof}{\noindent {\bf Proof.}}{\qed}
\newcommand{\cA}{{\cal A}}
\newcommand{\cF}{{\cal F}}
\newcommand{\cE}{{\cal E}}
\newcommand{\remove}[1]{}
\begin{document}

\baselineskip  0.2in 
\parskip     0.1in 
\parindent   0.0in 

\title{{{\bf Rendezvous in Networks in Spite of Delay Faults}}\footnote
{A preliminary version of this paper, entitled ``Fault-Tolerant Rendezvous in Networks", appeared in 
Proc. 41st International Colloquium on Automata, Languages and Programming (ICALP 2014), July 2014, Copenhagen, Denmark, 411-422. Partially supported by NSERC discovery grant 8136 --2013, by the Research Chair in Distributed Computing at the
Universit\'e du Qu\'ebec en Outaouais and by the french ANR project MACARON (anr-13-js02-0002). }}
\date{}
\newcommand{\inst}[1]{$^{#1}$}

\author{
J\'{e}r\'{e}mie Chalopin\inst{1},
Yoann Dieudonn\'e\inst{2},
Arnaud Labourel\inst{1},
Andrzej Pelc\inst{3}\\
\inst{1} LIF, CNRS \& Aix-Marseille University, Marseille, France.\\
E-mails: \url{{jeremie.chalopin,arnaud.labourel}@lif.univ-mrs.fr}\\
\inst{2} Universit\'{e} de Picardie Jules Verne, Amiens, France.\\
E-mail: {\url{yoann.dieudonne@u-picardie.fr}}\\
\inst{3} Universit\'{e} du Qu\'{e}bec en Outaouais, Gatineau, Canada.\\
E-mail: \url{pelc@uqo.ca}\\
}

\date{ }
\maketitle

\begin{abstract}

Two mobile agents, starting from different nodes of an unknown network, have to {meet at a node}.
Agents move in synchronous rounds using a deterministic algorithm. Each agent has a different label, which it can
use in the execution of the algorithm, but it does not know the label of the other agent. 
Agents do not know any bound on the size of the network.
In each round an agent decides if it remains
idle or if it wants to move to one of the adjacent nodes. Agents are subject to {\em delay faults}: if an agent incurs a fault in a given round,
it remains in the current node, regardless of its decision. If it planned to move and the fault happened, the agent is aware of it. We consider three
scenarios of fault distribution: random (independently in each round and for each agent with constant probability $0<p<1$), unbounded 
adversarial (the adversary can delay an agent for an arbitrary finite number of consecutive rounds) and bounded adversarial
(the adversary can delay an agent for at most $c$ consecutive rounds, where $c$ is unknown to the agents). The quality measure of a rendezvous algorithm is its cost, which is the total number of edge traversals.

For random faults, we show an algorithm with cost polynomial in the size $n$ of the network and {\em polylogarithmic} in the larger label $L$, which achieves rendezvous with very high probability in arbitrary networks.
By contrast, for unbounded adversarial faults we show that rendezvous is not {possible}, even in the class of rings.
Under this scenario we give a rendezvous algorithm with cost $O(n\ell)$, where $\ell$ is the smaller label,
working in arbitrary trees, and we show that $\Omega(\ell)$ is the lower bound on rendezvous cost, even for the two-node tree.
For bounded adversarial faults, we give a rendezvous algorithm working  for arbitrary networks, with cost polynomial in $n$,
and {\em logarithmic} in the bound $c$ and in the larger
label $L$. 

\vspace{2ex}

\noindent {\bf Keywords:} rendezvous, deterministic algorithm, mobile agent, delay fault.

\vspace{2ex}
\begin{center}
\end{center}
\end{abstract}

\vfill

\vfill

\thispagestyle{empty}
\setcounter{page}{0}
\pagebreak

\section{Introduction}

\subsection{The background}

Two mobile entities, called agents, starting from different nodes of a network, have to {meet at a node}.
This task is known  as {\em rendezvous} and has been extensively studied in the literature.
Mobile entities may represent software agents in computer networks, mobile robots, if the network is composed of corridors in a mine,
or people who want to meet in an unknown city whose streets form a network. 
The reason to meet may be to exchange data previously collected by the agents,
or to coordinate a future network maintenance task. In this paper we study a fault-tolerant version of the rendezvous problem:
agents have to meet in spite of delay faults that they can incur during navigation. Such faults may be due to mechanical reasons in the case
of robots and to network congestion in the case of software agents. 

\subsection{The model and the problem}

The network is modeled as an undirected connected graph.
We seek deterministic rendezvous algorithms that do not
rely on the knowledge of node identifiers, and can work in anonymous graphs as well  (cf. \cite{alpern02b}). 
The importance of designing such algorithms
is motivated by the fact that, even when nodes are equipped with distinct identifiers, agents may be unable to perceive them
because of limited sensory capabilities (a robot may be unable to read signs at corridor crossings), 
or nodes may refuse to reveal their identifiers to software agents, e.g., due to security or privacy reasons.
Note that, if nodes had distinct identifiers visible to the agents, the agents might explore the graph and meet at the node
with smallest identifier, hence rendezvous
would reduce to graph exploration.
On the other hand, we assume that
edges incident to a node $v$ have distinct labels (visible to the agents) in 
$\{0,\dots,d-1\}$, where $d$ is the degree of $v$. Thus every undirected
edge $\{u,v\}$ has two labels, which are called its {\em port numbers} at $u$
and at $v$. Port numbering is {\em local}, i.e., there is no relation between
port numbers at $u$ and at $v$. Note that in the absence of port numbers, edges incident to a node
would be {indistinguishable} for agents and thus rendezvous would be often impossible, 
as the adversary could prevent an agent from taking some edge incident to the current node.
Security and privacy reasons for not revealing node identifiers to software agents are irrelevant in the case of port numbers, and 
port numbers in the case of a mine or labyrinth can be made implicit, e.g., by marking one edge at each crossing
(using a simple mark legible by the robot),
considering it as corresponding to port 0 and all other port numbers increasing clockwise.

Agents start at different nodes of the graph and  traverse its edges in synchronous rounds.
They cannot mark visited nodes or traversed edges in any way.
{Initially, the agents are dormant (i.e., they do not execute the algorithm)}. The adversary wakes up each of the agents in possibly different rounds.
Each agent starts executing the algorithm in the round of its wake-up.
It has a clock measuring rounds that starts at its wake-up round.
In each round an agent decides if it remains
idle or if it chooses a port to move to one of the adjacent nodes. Agents are subject to {\em delay faults} in rounds in which they
decide to move: if an agent incurs a fault in such a round,
it remains at the current node and is aware of the fault. We consider three
scenarios of fault distribution: random (independently in each round and for each agent with constant probability $0<p<1$), unbounded 
adversarial (the adversary can delay an agent for an arbitrary finite number of consecutive rounds) and bounded adversarial
(the adversary can delay an agent for at most $c$ consecutive rounds, where $c$ is unknown to the agents).
Agents do not  know the topology of the graph or any bound on its size.
Each agent has a different positive integer label which it knows and can use in the execution of the rendezvous algorithm, but it does not know the label
of the other agent nor its starting round. 
When an agent enters a node, it learns its degree and the port of entry. When agents cross each other
on an edge, traversing it simultaneously in different directions, they do not notice this fact.
We assume that the memory of the agents is unlimited: from the computational point of view they are modeled as 
Turing machines. 

The quality measure of a rendezvous algorithm is its {\em cost}, which is the total number of edge traversals. For each of the considered fault
distributions we are interested in deterministic algorithms working at low cost. For both scenarios with adversarial faults we say that
a deterministic rendezvous algorithm works at a cost at most $C$ for a given class of graphs if for any initial positions in a graph of this class both agents
meet after at most $C$ traversals, regardless of the faults imposed by  the adversary obeying the given scenario.  
In the case of random faults the algorithm is also deterministic, but, due to the stochastic nature of faults, the estimate of its cost is with high probability.

\subsection{Our results}
\label{subsec:ourresults}

For random faults, we show an algorithm which achieves rendezvous  in arbitrary networks at  cost polynomial in the size $n$ of the network and {\em polylogarithmic} in the larger label $L$, with very high probability. More precisely, our algorithm achieves rendezvous with probability 1,  and its cost exceeds a polynomial in $n$ and $\log L$ with probability inverse exponential in $n$ and $\log L$.
By contrast, for unbounded adversarial faults, we show that rendezvous is not feasible, even in the class of rings.
Under this scenario we give a rendezvous algorithm with cost $O(n\ell)$, where $\ell$ is the smaller label,
working in arbitrary trees, and we show that $\Omega(\ell)$ is the lower bound on rendezvous cost, even for the two-node tree.
For bounded adversarial faults we give a rendezvous algorithm working  for arbitrary networks, with cost polynomial in $n$,
and {\em logarithmic} in the bound $c$ and in the larger
label $L$.


\subsection{Related work}
\label{subsec:relatwork}

The problem of rendezvous has been studied both under the randomized and the deterministic scenarios.
An extensive survey of  randomized rendezvous in various models  can be found in
\cite{alpern02b}, cf. also  \cite{alpern95a,alpern02a,anderson90,baston98,israeli}. 
Deterministic rendezvous in networks has been surveyed in \cite{Pe}.
Several authors
considered the geometric scenario (rendezvous in an interval of the real line, see, e.g.,  \cite{baston98,baston01},
or in the plane, see, e.g., \cite{anderson98a,anderson98b}).
Gathering more than two agents has been studied, e.g., in \cite{fpsw,israeli,lim96,thomas92}.

For the deterministic setting many authors studied the feasibility of synchronous rendezvous, and the time required to achieve this task, when feasible. For instance, deterministic rendezvous of agents equipped with tokens used to mark nodes was considered, e.g., in~\cite{KKSS}. Deterministic rendezvous of two agents that cannot mark nodes but have unique labels was discussed in \cite{DFKP,KM,TSZ07}.
Since this is our scenario, these papers are the most relevant in our context. All of them are concerned with the time of rendezvous in arbitrary
graphs. In \cite{DFKP} the authors show a rendezvous algorithm polynomial in the size of the graph, in the length of the shorter
label and in the delay between the starting times of the agents. In \cite{KM,TSZ07} rendezvous time is polynomial in the first two of these parameters and independent of the delay.

Memory required by the agents to achieve deterministic rendezvous has been studied in \cite{FP2} for trees and in  \cite{CKP} for general graphs.
Memory needed for randomized rendezvous in the ring is discussed, e.g., in~\cite{KKPM08}. 

Apart from the synchronous model used in this paper, several authors investigated asynchronous rendezvous in the plane \cite{CFPS,fpsw} and in network environments
\cite{BCGIL,CLP,DGKKP,DPV}.
In the latter scenario the agent chooses the edge which it decides to traverse but the adversary controls the speed of the agent. Under this assumption rendezvous
in a node cannot be guaranteed even in very simple graphs and hence the rendezvous requirement is relaxed to permit the agents to meet inside an edge.

Fault-tolerant aspects of the rendezvous problem have been investigated in \cite{CDW,Da,DMSVW,DPP,FKKLSS}.
Faulty unmovable tokens were considered in the context of the task of gathering many agents at one node.
In \cite{Da,FKKLSS} the authors considered gathering in rings, and in \cite{DMSVW} gathering was studied
in arbitrary graphs, under the assumption that an unmovable token is located in the starting node of each agent.
Tokens could disappear during the execution of the algorithm, but they could not reappear again. Byzantine tokens
which can appear and disappear arbitrarily  have been considered in \cite {DP} for the related task of network exploration.
A different fault scenario for gathering many agents was investigated in \cite{DPP}. The authors assumed that some number of agents
are Byzantine and they studied the problem of how many good agents are needed to guarantee meeting of all of them despite the actions
of Byzantine agents. To the best of our knowledge rendezvous with delay faults considered in the present paper has never been studied before.

\section{Preliminaries}\label{prelim}

Throughout the paper, 
the number of nodes of a graph is called its size.
In this section we recall two procedures known from the literature, that will be used as building blocks in some of our algorithms. 
The aim of the first procedure is graph exploration, i.e., visiting all nodes and traversing all edges of the graph by a single agent. 
The procedure, based on universal exploration sequences (UXS) \cite{Ko}, is a corollary of the  result of Reingold \cite{Re}. Given any positive integer $m$, it allows the agent to traverse all edges of any graph of size at most $m$,
starting from any node of this graph, using $P(m)$ edge traversals, where $P$ is some polynomial. (The original procedure of Reingold only visits all nodes, but it can be transformed to traverse all edges by visiting all neighbors of each visited node before going to the next node.) After entering a node of degree $d$ by some port $p$,
the agent can compute the port $q$ by which it has to exit; more precisely $q=(p+x_i)\mod d$, where $x_i$ is the corresponding term of the UXS.

A {\em trajectory} is a sequence of nodes of a graph, in which each node is adjacent to the preceding one. 
Given any starting node $v$,  we denote by $R(m,v)$ the trajectory obtained by Reingold's procedure
followed by its reverse. (Hence the trajectory starts and ends at node $v$.)  The procedure can be applied in any graph starting at any node, giving
some trajectory. We say that  the agent {\em follows} a trajectory if it executes the above procedure used to construct it.
This trajectory will be called {\em integral}, if the corresponding route covers all edges of the graph. By definition, the trajectory $R(m,v)$ is integral if it is
obtained by Reingold's procedure applied in any graph of size at most $m$ starting at any node~$v$. 

The second auxiliary procedure is the Algorithm {\tt RV-asynch-poly} from \cite{DPV} that guarantees rendezvous of two agents under the {\em asynchronous}
scenario.
Unlike in the synchronous scenario used in the present paper, in the asynchronous scenario each agent chooses consecutive ports that it wants to use but the adversary controls the speed of the agent, changing it arbitrarily during navigation. Rendezvous is guaranteed in the asynchronous scenario, if it occurs for any
behavior of the adversary.
Under this assumption rendezvous
in a node cannot be guaranteed even in very simple graphs and hence the rendezvous requirement is relaxed to permit the agents to meet inside an edge.
Recall that in our synchronous scenario, agents crossing each other on an edge traversing it simultaneously in different directions, not only do not meet but do not even notice the fact of crossing.

Algorithm {\tt RV-asynch-poly} works at cost polynomial in the size $n$ of the graph in which the agents operate and in the length of the smaller label. 
Let $A$ be a polynomial, such that if two agents with different labels $\lambda_1$ and $\lambda_2$ execute 
Algorithm {\tt RV-asynch-poly} in an $n$-node graph, then the agents meet in the asynchronous model, after at most 
$A(n,\min(\log \lambda_1, \log \lambda_2))$ steps.

\section{Random faults}

In this section we consider the scenario when agents are subject to random and independent faults. More precisely, for each agent and each round the probability that
the agent is delayed in this round is $0<p<1$, where $p$ is a constant, and the events of delaying are independent for each round and each agent.
Under this scenario we construct a deterministic rendezvous algorithm that achieves rendezvous in any connected graph with probability 1 and its cost exceeds a polynomial in $n$ and $\log L$ with probability inverse exponential in $n$ and $\log L$, where $n$ is the size of the graph and $L$ is the larger label.

The intuition behind the algorithm is the following. Since the occurrence of random faults represents a possible behavior of the asynchronous adversary
in Algorithm {\tt RV-asynch-poly} from \cite{DPV}, an idea to get the guarantee of a meeting with random faults at polynomial cost might be to only use this algorithm. However, this meeting 
may occur either at a node or inside an edge, according to the model from \cite{DPV}. In the synchronous model with random faults
considered in this section, the second type of meeting is not considered as rendezvous, in fact agents do not even notice it. Hence we must construct a  {\em deterministic} mechanism which guarantees a legitimate meeting at a node, with high probability, soon after an ``illegitimate'' meeting inside an edge. Constructing this mechanism and proving that it works as desired is the main challenge of rendezvous with random faults. 

\subsection{The algorithm}

Before describing the algorithm we define the following transformation of the label $\lambda$ of an agent. Let $\Phi(0)=(0011)$ and $\Phi(1)=(1100)$.
Let $(c_1\dots c_k)$ be the binary representation of the label $\lambda$.  We define the {\em modified label} $\lambda^*$  of the agent as the concatenation
of sequences $\Phi(c_1),\dots, \Phi(c_k)$ and $(10)$. Note that {if the labels} of two agents are different, then their transformed labels are different and none of them is a prefix of the other.

We first describe the procedure {\tt Dance} $(\lambda, x,y)$ executed by an agent with label $\lambda$ located at node $y$ at the start of the procedure.
Node $x$ is a node adjacent to $y$.

\noindent
{\bf Procedure} {\tt Dance} $(\lambda, x,y)$ 

\noindent
Let $\lambda^*=(b_1,\dots,b_m)$.

\noindent
Stage 1.\\
Stay idle at $y$ for 10 rounds.

\noindent
Stage 2.\\
{\bf for} $i=1$ {\bf to} $m$ {\bf do}\\
\hspace*{1cm}{\bf if } $b_i=0$\\
\hspace*{1cm}{\bf then} stay idle for two rounds\\
\hspace*{1cm}{\bf else} go to $x$ and in the next round return to $y$.

\noindent
Stage 3.\\Traverse the edge $\{x,y\}$ 12 times {(i.e., go back and forth 6 times across the edge $\{x,y\}$)}.
\hfill $\diamond$

Note that procedure {\tt Dance} $(\lambda, x,y)$ has cost $O(\log \lambda)$.

We will also use procedure {\tt Asynch}$(\lambda)$ executed by an agent with label $\lambda$ starting at any node $x_0$ of a graph. This procedure produces
an infinite walk $(x_0,x_1,x_2,\dots)$ in the graph resulting from applying Algorithm {\tt RV-asynch-poly} by a single agent with label $\lambda$.

Using these procedures we now describe Algorithm {\tt RV-RF} (for rendezvous with random faults), that works for an agent with label $\lambda$
starting at an arbitrary node of any connected graph. 

\noindent
{\bf Algorithm} {\tt RV-RF}

The algorithm works in two phases interleaved in a way depending on faults occurring in the execution and repeated until rendezvous.
The agent starts executing the algorithm in phase Progress.

\noindent
{\em Phase Progress} 

This phase proceeds in stages. Let $(x_0,x_1,x_2,\dots)$ be the infinite walk produced by the agent starting at node $x_0$ and applying 
{\tt Asynch}$(\lambda)$. The $i$th stage of phase Progress, for $i \geq 1$,  is the traversal of the edge $\{x_{i-1},x_i\}$ from $x_{i-1}$ to $x_i$, followed by
the execution of  {\tt Dance} $(\lambda, x_{i-1},x_i)$. The agent executes consecutive stages of phase Progress until a fault occurs.

If a fault occurs in the first round of the $i$th stage, then the agent repeats the attempt of this traversal again, until success
and then continues with {\tt Dance} $(\lambda, x_{i-1},x_i)$. If a fault occurs in the $t$th round  of the $i$th stage, for $t>1$, i.e.,  during the execution of 
procedure  {\tt Dance} $(\lambda, x_{i-1},x_i)$ in the $i$th stage, then this execution is interrupted and phase Correction is launched starting at the node
where the agent was situated when the fault occurred.

\noindent
{\em Phase Correction} 

Let $e$ denote the edge $\{x_{i-1},x_i\}$ and let $w$ be the node at which the agent was situated when the last fault occurred during the execution of 
{\tt Dance} $(\lambda, x_{i-1},x_i)$. Hence $w$ is either $x_{i-1}$ or $x_i$.

\noindent
Stage 1.\\
Stay idle at $w$ for 20 rounds.

\noindent
Stage 2.\\
Traverse edge $e$ 20 times.

\noindent
Stage 3.\\
If the agent is not at $w$, then go to $w$. 

If a fault occurs during the execution of phase Correction, then the execution of this phase is dropped and a new phase Correction is launched from the beginning,
starting at the node where the agent was situated when the fault occurred. Upon completing an execution of the phase Correction without any fault the agent is at node $w$. It resumes the execution of the $t$th round of the $i$th stage of phase Progress.
\hfill $\diamond$

\subsection{Correctness and analysis}

This section is devoted to the proof of correctness and analysis of performance of Algorithm {\tt RV-RF}. It is split into a series of lemmas.
The first lemma is straightforward.

\begin{lemma}\label{10}
In every segment of 10 consecutive rounds {without any faults} of execution of Stage 2 of procedure {\tt Dance} the agent is idle at least once
and moves at least once.
\end{lemma}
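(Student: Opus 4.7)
My plan is to reduce the lemma to a combinatorial property of the modified label $\lambda^*$ and then verify that property directly from the definition of $\Phi$.

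First, I would set up the round-to-bit correspondence in Stage 2 of \texttt{Dance}: each bit $b_i$ of $\lambda^*=(b_1,\dots,b_m)$ occupies exactly two consecutive rounds; if $b_i=0$ those two rounds are both idle, and if $b_i=1$ both rounds involve an edge traversal. Thus, to conclude that in any 10 consecutive fault-free rounds of Stage 2 the agent is both idle at least once and moving at least once, it suffices to show that the bits of $\lambda^*$ touched (fully or partially) by any such 10-round window contain both a $0$ and a $1$.

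Next I would establish the structural claim that $\lambda^*$ contains no run of five or more identical consecutive bits. This is immediate from the definition $\Phi(0)=(0011)$, $\Phi(1)=(1100)$: within a single block the maximum run is $2$, and the only way to create a longer run is by concatenating adjacent blocks, giving $\Phi(0)\Phi(1)=00111100$ (a run of four $1$'s) or $\Phi(1)\Phi(0)=11000011$ (a run of four $0$'s). Appending the final $(10)$ cannot create any new run of length $\ge 5$. Hence the longest run in $\lambda^*$ has length exactly~$4$.

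Finally I would do the case analysis on how the 10-round window aligns with the bit boundaries. Either it covers exactly $5$ consecutive full bits of $\lambda^*$, or it covers one partial bit, then $4$ full bits, then another partial bit. In the aligned case, $5$ consecutive bits cannot all coincide, so both $0$ and $1$ appear. In the straddling case, if the four fully covered middle bits are all equal, then the maximum-run-$4$ property forces at least one of the two partial bits on either side to carry the opposite value; and since even a single partial round of a $0$-bit is idle and a single partial round of a $1$-bit is a move, both an idle round and a move round still appear in the window. The main (and only) subtle point is this straddling case, which is why the precise bound ``max run $\leq 4$'' — rather than some weaker density estimate — is essential.
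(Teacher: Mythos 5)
Your proof is correct: the paper itself omits the argument, stating only that the lemma is straightforward, and your reduction to the fact that $\lambda^*$ contains no run of five equal bits (forced by the $\Phi(0)=(0011)$, $\Phi(1)=(1100)$ encoding) together with the aligned/straddling window case analysis is precisely the intended justification. The one point worth stating explicitly is that Stage 2 has length $2m\geq 12$ rounds, so every 10-round window of Stage 2 does fit within the bit structure you describe; with that noted, nothing is missing.
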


\begin{lemma}\label{20}
In every segment of 20 consecutive rounds {without any faults} of execution of procedure {\tt Dance} the agent moves at least once.
\end{lemma}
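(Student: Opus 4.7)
The plan is to bound the maximum number of consecutive idle rounds that can occur during a fault-free execution of \texttt{Dance}, and show that this bound is strictly less than $20$ (in fact, exactly $10$). Once that bound is established, any $20$ consecutive fault-free rounds must contain at least one round in which the agent decides to move; since the segment is fault-free, that decision results in an actual edge traversal, giving the claim.

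First I would decompose the \texttt{Dance} procedure into its three stages and catalogue the rounds. Stage~1 is a single block of $10$ consecutive idle rounds. Stage~2 allocates $2$ rounds per bit $b_i$ of $\lambda^*$: a $0$-bit contributes $2$ consecutive idle rounds and a $1$-bit contributes $2$ consecutive move rounds (go to $x$, return to $y$). Stage~3 consists of $12$ consecutive traversals of $\{x,y\}$ and contains no idle rounds at all. Thus the maximum run of consecutive idle rounds inside Stage~2 is exactly twice the maximum run of consecutive $0$-bits in $\lambda^*$.

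Next I would examine $\lambda^*=\Phi(c_1)\Phi(c_2)\cdots\Phi(c_k)(10)$ with $\Phi(0)=(0011)$ and $\Phi(1)=(1100)$, and inspect all possible adjacencies ($\Phi(0)\Phi(0)$, $\Phi(0)\Phi(1)$, $\Phi(1)\Phi(0)$, $\Phi(1)\Phi(1)$, and the joins with the trailing suffix $(10)$). The only configuration yielding a $0$-run of length greater than $2$ is the junction $\Phi(1)\Phi(0)=11000011$, producing a $0$-run of length exactly $4$; hence at most $8$ consecutive idle rounds can occur entirely inside Stage~2. Finally I would rule out extensions across stage boundaries: the $10$ idle rounds of Stage~1 cannot be extended into Stage~2 because labels are positive integers, forcing $c_1=1$ and hence $b_1=1$, so Stage~2 begins with a move; and idle tails at the end of Stage~2 cannot be extended into Stage~3, since Stage~3 opens with an edge traversal. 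Combining these observations yields $10$ as the global maximum length of a consecutive idle run inside \texttt{Dance}, which is strictly less than $20$.

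The main obstacle is ruling out a long idle run that straddles the Stage~1/Stage~2 boundary, where Stage~1 already contributes $10$ idle rounds. Everything there hinges on the fact that the binary representation of a positive label begins with $1$, which forces the first action of Stage~2 to be a move; once this observation is in place, the rest reduces to a routine finite case analysis of the four substrings of length~$8$ built from $\Phi(0)$ and $\Phi(1)$, together with the fixed suffix $(10)$.
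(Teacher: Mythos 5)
Your proof is correct. You bound the longest run of consecutive idle rounds anywhere in a fault-free execution of \texttt{Dance} by $10$ (the Stage~1 block), after checking that this run cannot be extended into Stage~2 (since the leading bit of a positive label is $1$, so $\lambda^*$ begins with $\Phi(1)=1100$ and Stage~2 opens with a move), that the worst $0$-run inside $\lambda^*$ is the length-$4$ run arising at a $\Phi(1)\Phi(0)$ junction (giving at most $8$ consecutive idle rounds inside Stage~2), and that Stage~3 contains no idle rounds. Since $10<20$, any $20$ consecutive fault-free rounds contain a round in which the agent attempts, and hence completes, a traversal.

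The paper takes a shorter but less self-contained route: it observes that a $20$-round segment either meets Stage~3 (where the agent always moves) or, since Stage~1 has only $10$ rounds, must contain $10$ consecutive rounds of Stage~2, and then invokes the second part of Lemma~\ref{10} (every $10$ consecutive fault-free rounds of Stage~2 contain a move), which it declares straightforward and does not prove. Your argument effectively re-derives the content of that unproved lemma in a global form, via the same structural fact about $\Phi$ that makes Lemma~\ref{10} true (no $0$-run in $\lambda^*$ longer than $4$). What your version buys is a sharper and fully explicit statement --- the maximum idle run is exactly $10$, so even $11$ consecutive fault-free rounds would force a move --- at the cost of an explicit case analysis of the stage boundaries that the paper's reduction avoids. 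Both proofs are sound; yours is the more detailed and more informative of the two.
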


\begin{proof}
Since Stage 1 of procedure {\tt Dance} consists of 10 rounds in which the agent is idle and in Stage 3 the agent is never idle,
the lemma follows from the second part of Lemma \ref{10}.
 
\end{proof}

In our reasoning we will consider an auxiliary model $\cal M$ of the behavior of agents and of the type of faults. Agents move in synchronous rounds of constant duration T. An edge traversal is always performed at a constant speed and so that the destination node is reached exactly at the end of the round involving the traversal. The agents do not notice when they meet, neither at a node nor inside an edge. Hence, when they execute Algorithm {\tt RV-RF},
they do so indefinitely in an independent way.  Faults in model $\cal M$ are 
unbounded adversarial, i.e., the adversary can delay an agent at a node for an arbitrary finite number of consecutive rounds.
When an agent executes procedure {\tt Asynch}$(\lambda)$ in model $\cal M$, it attempts to make the next step of the procedure in each round, but can be 
delayed by the adversary at each step. 
For rounds $t<t'$ we denote by $[t,t']$ the time interval between the beginning of round
$t$ and the end of round $t'$. We use $(t,t')$ instead of $[t+1,t'-1]$.
For convenience, we will sometimes use the phrase ``in round $t$'' instead of ``at the end of round $t$''
and ``by round $t$'' instead of ``by the end of round $t$''.
Considerations in this auxiliary model will serve us to draw conclusions about rendezvous in the random fault model.

A meeting in model $\cal M$ is defined as both agents being at the same node at the same time or being in the same point inside an edge at the same time.  We use the word {\em meeting} in the auxiliary model $\cal M$  to differentiate it from {\em rendezvous} in our principal model: the first may occur at a node or inside an edge and agents do not notice it, and the second can occur only at a node, agents notice it and stop. 
Notice that five types of meetings are possible in model $\cal M$.

Type 1. The agents cross each other inside an edge $\{u,v\}$ in some round, one agent going from $u$ to $v$ and the other going from $v$ to $u$.

Type 2. The agents stay together inside an edge during its traversal in the same round in the same direction.

Type 3. The agents meet at node $v$ coming from the same node $u$, not necessarily in the same round.

Type 4. The agents meet at node $v$ coming from different nodes $u$ and $w$, not necessarily in the same round.

Type 5. The agents meet at node $v$, such that  one of them has never moved from $v$.



Hence when there is a meeting in model $\cal M$, only one of the following 9 situations can occur:

Situation A1. The agents cross each other inside an edge $\{u,v\}$ in some round, one agent going from $u$ to $v$ and the other going from $v$ to $u$.

Situation A2. Agents meet at a node $v$, such that one of them has never moved yet.

Situation A3. Agents meet at a node $v$, both coming from the same node $u$ in different rounds. 

Situation A4. Agents $a$ and b meet at a node $v$, such that:
\begin{enumerate}
\item
agent $a$ comes from a node $u$ to $v$ {in its $r$th edge traversal} in round $k_{1,a}$ and
after the meeting goes to a node {$w\ne u$} {in its $(r+1)$th edge traversal} in round $k_{2,a}$;
\item
agent $b$ comes from node $w$ to $v$ {in its $s$th edge traversal} in round $k_{1,b}$ and after the meeting goes
to node $u$ {in its $(s+1)$th edge traversal} in round $k_{2,b}$.
\item
{rounds $k_{1,a}$ and $k_{1,b}$ (resp. $k_{2,a}$ and $k_{2,b}$) are not necessarily the same and $max(k_{1,a},k_{1,b})<min(k_{2,a},k_{2,b})$.}
\end{enumerate}

Situation A5. Agents $a$ and $b$ meet at a node $v$, such that:
\begin{enumerate}
\item
agent $a$ comes to $v$ from $u$ in its $r$th edge traversal {in round $k_{1,a}$} and {after the meeting} goes to node {$w\ne u$} in its $(r+1)$th edge traversal {in round $k_{2,a}$}; 
\item
agent $b$ comes to $v$ from $w$ in its $s$th edge traversal {in round $k_{1,b}$} and {after the meeting} goes to a node $z \neq u$ in its $(s+1)$th edge traversal {in round $k_{2,b}$};
\item
{rounds $k_{1,a}$ and $k_{1,b}$ are not necessarily the same and $max(k_{1,a},k_{1,b})<k_{2,a}\leq k_{2,b}$.}
\end{enumerate}

Situation B1.
Agents meet at a node $v$, both coming from the same node $u$ in the same round. 

Situation B2
Agents meet inside an edge $\{u,v\}$, both coming from $u$ and going to $v$ in the same round.
Since they move at the same constant speed, they
are simultaneously at each point inside the edge $\{u,v\}$ during the traversal. We will consider this as a single meeting.

Situation B3.
Agents meet at a node $v$, such that:
\begin{enumerate}
\item
agent $a$ comes from a node $u$ to $v$ {in its $r$th edge traversal} in round $k_{1,a}$  and after the
meeting goes to a node $w$ {in its $(r+1)$th edge traversal} in round $k_{2,a}$;
\item
 agent $b$ comes from node $p$ to $v$ {in its $s$th edge traversal} in round $k_{1,b}$ and after the meeting goes to node
$q$ {in its $(s+1)$th edge traversal} in round $k_{2,b}$,
\item
{$p \ne w$,  $q\ne u$;  rounds $k_{1,a}$ and $k_{1,b}$ (resp. $k_{2,a}$ and $k_{2,b}$) are not necessarily the same; $max(k_{1,a},k_{1,b})<min(k_{2,a},k_{2,b})$.}
\end{enumerate}

Situation B4.
Agents meet at node $v$ such that
\begin{enumerate}
\item
agent $a$ comes to $v$ from $u$ in its $r$th edge traversal {in round $k_{1,a}$} and {after the meeting} goes to node {$w\ne u$} in its $(r+1)$th edge traversal {in round $k_{2,a}$}; 
\item
agent $b$ comes to $v$ from $w$ in its $s$th edge traversal {in round $k_{1,b}$} and {after the meeting} goes to a node $z \neq u$ in its $(s+1)$th edge traversal {in round $k_{2,b}$};
\item
{rounds $k_{1,a}$ and $k_{1,b}$ are not necessarily the same and $max(k_{1,a},k_{1,b})<k_{2,b}<k_{2,a}$.}

\end{enumerate}

The next lemma shows that one of the situations A1 -- A5 is unavoidable when applying Asynch in the model $\cal M$.

\begin{lemma}\label{5 situations}
Consider two agents $a$ and $b$, with different labels $\lambda_1$ and $\lambda_2$, respectively,  starting at arbitrary different nodes of an $n$-node graph, where $n$ is unknown to the agents. In the model $\cal M$, if agent $a$ applies procedure {\tt Asynch}( $\lambda_1$) and
agent $b$ applies procedure {\tt Asynch}( $\lambda_2$), then, for every behavior of the adversary at least one of the situations
A1 -- A5 must occur after a total of at most $A(n,\min(\log \lambda_1, \log \lambda_2))$ steps of both agents.
\end{lemma}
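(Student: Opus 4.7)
The plan is to derive the lemma from the rendezvous guarantee of Algorithm {\tt RV-asynch-poly} proved in \cite{DPV} and then case-analyse the type of the coincidence it produces.

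First I would observe that any execution in model $\cal M$ embeds into a legal execution in the asynchronous model of \cite{DPV}. The $\cal M$-adversary delaying an agent for $k$ consecutive rounds at a node is simulated by the asynchronous adversary that keeps the agent stationary at that node for $kT$ time units and then lets it traverse the next edge at constant speed during $T$ time units. Under this simulation the port sequences declared by the two agents are precisely those produced by {\tt Asynch}$(\lambda_1)$ and {\tt Asynch}$(\lambda_2)$, so by the cost guarantee of {\tt RV-asynch-poly}, within $A(n,\min(\log\lambda_1,\log\lambda_2))$ total edge traversals the two agents must be at a common physical location at some moment, either at a node or strictly inside an edge.

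Next I would verify that any such coincidence in $\cal M$ matches one of the nine enumerated situations. A coincidence strictly inside an edge $\{u,v\}$ is only possible either by the two agents traversing $\{u,v\}$ in opposite directions in the same round (giving A1) or by both traversing it in the same direction in the same round (giving B2), since in $\cal M$ both agents move at the same constant speed during their respective traversals. A coincidence at a node $v$ is parameterised by (i)~whether one of the agents has not moved yet, (ii)~the predecessors from which the two agents arrived at $v$, (iii)~the two arrival rounds, and (iv)~the successors and the departure rounds after the meeting; the situations A2--A5 together with B1, B3, B4 exhaust these discrete parameters.

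The main obstacle is to show that at least one of the coincidences occurring within the cost budget is of A-type. I would consider the earliest round $t^{\star}$ at which the two agents are co-located and argue by a backward trace that every B-type configuration at $t^{\star}$ leads either to a contradiction with the minimality of $t^{\star}$ or to an A2 configuration. Indeed, a B1 or B2 at $t^{\star}$ forces both agents to have been at a common source node at the end of round $t^{\star}-1$, and the only way this does not contradict minimality is that one of the agents has not moved since the algorithm started, which is precisely situation A2. For B3 and B4, the configuration requires both agents to remain at $v$ during more than one round around $t^{\star}$, so the arrival of the second agent at $v$ yields an earlier coincidence at $v$; re-classifying that earlier coincidence and iterating the analysis terminates at a situation of type A3, A4 or A5, depending on the predecessors and the successors. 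This is the delicate step of the argument, where the case analysis has to be carried out carefully; in each branch an A-type meeting is located within the inherited cost budget $A(n,\min(\log\lambda_1,\log\lambda_2))$, completing the proof.
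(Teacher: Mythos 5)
Your setup is fine and matches the paper's: embed the model-$\cal M$ execution into the asynchronous model of \cite{DPV}, invoke the cost guarantee of {\tt RV-asynch-poly} to get a coincidence within $A(n,\min(\log\lambda_1,\log\lambda_2))$ steps, and check that the nine situations exhaust all possible coincidences. The problem is the step you yourself flag as delicate. Your plan is to take the \emph{earliest} coincidence $t^{\star}$ and show by a backward trace that a B-type configuration there either contradicts minimality or reduces to an A-type. This works for B1 and B2 (both agents coming from the same node $u$ in the same round forces an earlier coincidence at $u$), but it fails for B3 and B4. A B3 meeting can genuinely be the first coincidence: take a node $v$ with four distinct neighbours $u,w,p,q$, let agent $a$ arrive from $u$ and later leave towards $w$, and let agent $b$ arrive from $p$ and later leave towards $q$. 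Nothing forces an earlier co-location, and no amount of ``re-classifying'' the rounds within the co-location interval changes the type of the meeting, because the distinction between A3/A4/A5 and B3/B4 is determined by the identities of the predecessor and successor nodes ($p=u$, or $p=w$ and $q=u$, etc.), not by which round of the interval you examine. So the iteration you describe does not terminate at an A-type situation; it just sits at the same B3.

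The missing idea is the one the paper's proof is built on: one cannot show that the first meeting is A-type (it need not be); instead one assumes that \emph{no} A-type situation occurs throughout the budget, so every meeting in the truncated scenario $S$ is of type B1--B4, and then one shows that each such meeting can be surgically eliminated by a perturbation that is legal for the \emph{asynchronous} adversary even when it is not legal in $\cal M$ --- e.g.\ delaying one agent by a round, or freezing an agent partway inside an edge until the other has passed through $v$ --- without creating new meetings and without disturbing the suffix of the schedule. Induction on the number of meetings then yields an asynchronous execution of at most $A(n,\min(\log\lambda_1,\log\lambda_2))$ steps with no meeting at all, contradicting \cite{DPV}. Your proposal contains no trace of this perturbation/induction mechanism, and the local minimality argument you substitute for it cannot be repaired to cover B3 and B4, so the proof as sketched does not go through.
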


\begin{proof}
Suppose, for contradiction, that there exists a behavior of the adversary in model $\cal M$ such that all situations A1 -- A5 can be
avoided during the first $A(n,\min(\log \lambda_1, \log \lambda_2))$ steps of both agents. 
Denote by $S$ the scenario truncated to the first $A(n,\min(\log \lambda_1, \log \lambda_2))$ steps of both agents resulting from the above behavior.
Since Algorithm {\tt RV-asynch-poly} guarantees a meeting under any behavior of an adversary in the asynchronous model after a total of at most $A(n,\min(\log \lambda_1, \log \lambda_2))$ steps of both agents (cf. \cite{DPV}), 
this means that in scenario $S$, which takes place in the model $\cal M$,  at least one of the situations B1 -- B4 must occur.
We will show that based on scenario $S$ in model $\cal M$ it is possible to construct  a scenario $AS$ in the asynchronous model
from \cite{DPV} with no meetings after a total of at most $A(n,\min(\log \lambda_1, \log \lambda_2))$ steps,
which contradicts the result from  \cite{DPV}. 

Consider the first meeting $\rho$ in scenario $S$. This meeting cannot be in situation B1 or B2
because then the agents would meet previously at node $u$. Hence either situation B3 or B4 must occur. 

First suppose that 
situation B4 occurred during the meeting $\rho$. Split the scenario $S$ into two parts $S_1$ and $S_2$,  such that $S_2$ 
is the part of scenario $S$ that consists of all rounds after the round in which agent $b$ makes its $(s+1)$th edge traversal.
In particular, in the beginning of $S_2$ agent $a$ has already made its $r$th edge traversal but not yet its $(r+1)$th edge traversal.
$S_1$ is the part of scenario $S$ preceding $S_2$. Notice that $S_1$ contains only one meeting, the meeting $\rho$.

We can modify the behavior of the adversary in scenario $S_1$ to produce scenario $S'_1$ in which there is no meeting.
In order to do so, consider 3 cases.

Case 1. The meeting $\rho$ occurs in the round when both $a$ and $b$ arrive at $v$.

In the round preceding the execution of its $r$th edge traversal agent $a$ is at node $u$, while $b$ is at node $w \neq u$.
According to scenario $S_1$, in the next round both agents go to $v$. The modification to obtain scenario $S'_1$ is as follows. The adversary delays
agent $a$ at $u$ for one round and releases agent $b$ to make its $s$th edge traversal to $v$. Then the adversary releases agent $a$ to go to $v$ and agent $b$ to make its $(s+1)$th edge traversal to $z$.
This avoids the meeting.

Case 2. The meeting $\rho$ occurs in a round in which $b$ is idle at $v$ (it came there in a previous round) and $a$ comes to $v$ from $u$ executing its $r$th edge traversal.

In the round preceding this traversal agent $a$ was at $u$. The modification to obtain scenario $S'_1$ is as follows. 
The adversary releases $b$ to make its $(s+1)$th edge traversal to $z$
and in the same round releases $a$ to make its $r$th edge traversal.
This avoids the meeting.

Case 3.
 The meeting $\rho$ occurs in a round in which $a$ is idle at $v$ (it came there in a previous round) and $b$ comes to $v$ from $w$ executing its $s$th edge traversal.

Let $t$ be the round in which agent $a$ makes its $r$th edge traversal and $t'>t$ the round in which agent $b$ makes its $s$th edge traversal.
In the time interval $[x,y]$, where $x$ is the beginning of round $t$ and $y$ is the end of round $t'-1$, agent $b$ does not traverse edge $\{u,v\}$ because otherwise the agents would previously meet inside this edge
or at $v$. 
The modification to obtain scenario $S'_1$ is as follows. The adversary starts moving 
agent $a$ from $u$ to $v$ in the beginning of round $t$, but blocks it inside the edge $\{u,v\}$ until the end of the round $t'$ in which
$b$ makes its $s$th edge traversal to $v$, which it is released to do. In the next round the adversary releases $b$ to make its
$(s+1)$th edge traversal to $z$ and releases $a$ to finish its $r$th edge traversal to $v$ (these two actions finish simultaneously). 
This avoids the meeting.

Hence in all cases we obtain a scenario $S'_1$ which does not contain any meeting. 
Notice that scenario $S'_1$ is still a scenario in model $\cal M$ in cases 1 and 2, but is {\em not} a scenario in this model in case  3.
Indeed, in this case agent $a$ does not travel with constant speed inside the edge $\{u,v\}$. However, in all cases this is a legitimate scenario for the
asynchronous adversary. It ends at the end of  a round when
$a$ has made its $r$th edge traversal and when $b$ has made its $(s+1)$th edge traversal but {before $b$ has started} its $(s+2)$th edge traversal.
Hence scenario $S'$ consisting of scenario $S'_1$ followed by $S_2$ is possible for the asynchronous adversary, has one fewer meeting than scenario $S$ and all situations A1 -- A5 are still avoided.

In a similar way it can be shown that if situation B3 occurred during the meeting $\rho$, then a scenario avoiding the first meeting
of $S$ can be produced. Notice that scenario $S_2$ remained unchanged and it is still a scenario in model $\cal M$. The same reasoning can be now applied to scenario $S_2$, again
transforming it into a scenario with one fewer meeting (avoiding the first meeting in $S_2$), the last part of which (containing other meetings, if any, except the first 
meeting of scenario $S_2$) is still a scenario in
model $\cal M$.

By induction on the number of meetings it follows that a scenario $AS$ for some behavior of the asynchronous adversary
(not for model $\cal M$ any more) without any meeting after a total of $A(n,\min(\log \lambda_1, \log \lambda_2))$ steps can be produced. 
This, however,
contradicts the fact that Algorithm {\tt RV-asynch-poly} guarantees a meeting
under any behavior of the asynchronous adversary after a total of at most $A(n,\min(\log \lambda_1, \log \lambda_2))$ steps of both agents.
\end{proof}

\begin{lemma}\label{5 events}
Consider two agents $a$ and $b$, with different labels $\lambda_1$ and $\lambda_2$, respectively,  starting at arbitrary different nodes of an $n$-node graph, where $n$ is unknown to the agents. In the model $\cal M$, if  the agents have executed at least $A(n,\min(\log \lambda_1, \log \lambda_2))$
stages  of phase Progress of Algorithm {\tt RV-RF}, then at least one of the following events must have occurred:

Event 1. There exists a round $k$ during which the agents cross each other inside an edge $\{u,v\}$, one agent going from $u$ to $v$ and the other going from $v$ to $u$, each of the agents applying a step of procedure {\tt Asynch}.

Event 2.  There exists a round $k$ during which one agent arrives at node $v$ applying a step of procedure {\tt Asynch} and
the other agent has never moved yet.

Event 3. There exists a round $k$ during which one agent arrives at node $v$ from $u$ applying a step of procedure {\tt Asynch}, and the other agent has also arrived at $v$ from $u$ applying a step of procedure {\tt Asynch} in some round $k'<k$ and has not made
any further step of procedure {\tt Asynch} until the end of round $k$.

Event 4. There exists a round $k$ during which agent $b$ arrives at node $v$ from $w$ applying a step of procedure {\tt Asynch}
and such that:
\begin{itemize}
\item
agent $a$  has arrived at $v$ from {$u\ne w$} applying a step of procedure {\tt Asynch} in some round $k'\leq k$ and has not made
any further step of procedure {\tt Asynch} during the time interval $[k',k]$;
\item
the next step of procedure {\tt Asynch} after round $k$ brings agent $a$ to node $w$;
\item
the next step of procedure {\tt Asynch} after round $k$ brings agent $b$ to node $u$.
\end{itemize}

Event 5. There exists a round $k_1$ during which agent $a$ arrives at node $v$ from $u$ applying a step of procedure {\tt Asynch}
and such that:
\begin{itemize}
\item
the next step of procedure {\tt Asynch} after round $k_1$ performed by agent $a$ in a round $k_2>k_1$, brings agent $a$ to node {$w\ne u$};
\item
agent $b$ arrives at $v$ from $w$ applying a step of procedure {\tt Asynch} in some round $p_1<k_2$ and goes from
$v$ to some node $z\neq u$ applying the next step of procedure {\tt Asynch} in some round  $ p_2\geq k_2$. 
\end{itemize}
\end{lemma}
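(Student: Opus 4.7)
The plan is to reduce the lemma to Lemma \ref{5 situations} by a stripping argument. Starting from the given execution of the Progress phase of Algorithm {\tt RV-RF} by the two agents in model $\cal M$, I would construct an auxiliary execution $\widetilde{E}$ in model $\cal M$ in which each agent performs \emph{only} its Asynch steps, and at exactly the same rounds as in the Progress execution. Recall that the $i$th stage of Progress begins with the single edge traversal from $x_{i-1}$ to $x_i$, which is precisely one step of procedure {\tt Asynch}$(\lambda)$; I keep the round in which that traversal is successfully completed and reinterpret every other round of the Progress execution (rounds spent idle during Dance, rounds traversing the edge $\{x_{i-1},x_i\}$ during Dance, and rounds lost to adversarial faults on Asynch retries or inside Correction) as a delay of the agent at its current node, imposed by the asynchronous adversary of $\widetilde{E}$. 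Since each Dance has duration $O(\log \lambda)$ and model $\cal M$ permits only finite adversarial delays, all the gaps introduced in $\widetilde{E}$ are finite; moreover, the Asynch walk $(x_0,x_1,x_2,\dots)$ is a deterministic function of the label and the graph, so the sequence of sources and destinations of the Asynch steps in $\widetilde{E}$ is identical to that in the Progress execution. Hence $\widetilde{E}$ is a legal execution of {\tt Asynch}$(\lambda_1)$ and {\tt Asynch}$(\lambda_2)$ in model $\cal M$.

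After at least $A(n,\min(\log \lambda_1, \log \lambda_2))$ Progress stages in total the two agents have performed at least that many Asynch steps in $\widetilde{E}$, so Lemma \ref{5 situations} applied to $\widetilde{E}$ guarantees that at least one of Situations A1--A5 occurs in $\widetilde{E}$. I would then translate each Situation to the corresponding Event by a routine bookkeeping check: Situations A1--A5 are formulated exclusively in terms of the rounds, sources, and destinations of Asynch steps, and those data coincide in $\widetilde{E}$ and in the Progress execution. Situation A1 is Event 1 verbatim; for Situation A2 I would observe that an agent that has never moved in $\widetilde{E}$ is also one that has never performed an Asynch step, which in the Progress phase means one that has never moved at all, because every stage of Progress starts with an Asynch step; Situation A3 matches Event 3; and matching the round indices $k_{1,a},k_{1,b},k_{2,a},k_{2,b}$ to $k,k',k_1,k_2,p_1,p_2$, Situations A4 and A5 unfold directly into the three bullets of Events 4 and 5 respectively.

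The main obstacle will be justifying that $\widetilde{E}$ really is a valid asynchronous execution of {\tt Asynch} in $\cal M$ despite the fact that in the Progress execution the agent physically moves along the edge $\{x_{i-1},x_i\}$ during Dance. The crucial observation is that Dance is deterministic, confined to a single edge, and never contributes to the Asynch walk (which depends only on the label and the graph); so stripping away the Dance motions and the fault-induced retries leaves the Asynch walk unchanged, and the physical motion during Dance can only affect meetings in or around the edge $\{x_{i-1},x_i\}$, which are irrelevant for Situations A1--A5 (these reference only the positions of the agents at Asynch moments). Once this point is secured, Lemma \ref{5 situations} transfers to $\widetilde{E}$ and the correspondence A$i$ $\Leftrightarrow$ Event $i$ completes the proof.
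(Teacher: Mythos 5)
Your proposal is correct and follows essentially the same route as the paper: the paper also replaces every non-{\tt Asynch} action (Dance rounds, Correction rounds, fault retries) by an attempted {\tt Asynch} step delayed by the adversary, obtaining a legitimate execution of {\tt Asynch} in model $\cal M$ to which Lemma \ref{5 situations} applies, with Situations A1--A5 corresponding to Events 1--5. The only cosmetic difference is that the paper states this as a proof by contradiction while you argue directly.
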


\begin{proof}
Consider a scenario $S$ in model $\cal M$ in which none of the events 1 -- 5 occurred
by the time the agents completed  $A(n,\min(\log \lambda_1, \log \lambda_2))$ stages of phase Progress  of Algorithm {\tt RV-RF}. Replace in scenario $S$ each action
that does not correspond to a step of procedure {\tt Asynch} by (a tentative of) performing by the agent the next step of procedure {\tt Asynch} and 
imposing in this round a delay fault by the adversary. The obtained scenario $S'$ is a legitimate scenario in the execution of procedure {\tt Asynch} in model
$\cal M$ by agents with labels $\lambda_1$ and $\lambda_2$ in which, after $A(n,\min(\log \lambda_1, \log \lambda_2))$ steps of both agents executed in this procedure none of the situations A1 -- A5 took place. This contradicts Lemma~
\ref{5 situations}.

\end{proof}

In the sequel we will need the following notions. The {\em home} of an agent is the last node at which it arrived applying 
procedure {\tt Asynch} and the {\em cottage} of an agent is the previous node that it reached applying 
procedure {\tt Asynch} (or the starting node of the agent, if the home is reached in the first step of procedure {\tt Asynch}).
The home of an agent $a$ is denoted by $Home(a)$ and its cottage by $Cot(a)$.
Hence if agent $a$ arrives at $y$ from $x$ applying a step of procedure {\tt Asynch}, then $x=Cot(a)$ and $y=Home(a)$.

We say that a fault occurring during the execution of Algorithm {\tt RV-RF} is {\em repaired} if either the execution of phase Correction
following this fault has been completed without any occurrence of a fault, or the execution of phase Correction
following this fault has been interrupted by a fault that has been repaired. Intuitively, this recursive definition says that a fault is repaired if it is followed by a series
of partial executions  of phase Correction, each except the last one interrupted by a fault, and the last one not interrupted by a fault and executed completely.

The following two lemmas show that after the occurrence of event 1 or event 3 from Lemma \ref{5 events} a meeting of the agents at some node must happen within  $O(\max(\log \lambda_1, \log \lambda_2))$ steps in phase Progress.

\begin{lemma}\label{ab}
Consider two agents $a$ and $b$, with different labels $\lambda_1$ and $\lambda_2$, respectively,  starting at arbitrary different nodes of an $n$-node graph, where $n$ is unknown to the agents. Suppose that in some round $t'$ of the execution of Algorithm {\tt RV-RF} 
in the model $\cal M$ we have $Home(a)=Cot(b)$ and $Home(b)=Cot(a)$. Then a meeting
occurs at some node after $O(\max(\log \lambda_1, \log \lambda_2))$ steps of the phase Progress executed after round $t'$.
\end{lemma}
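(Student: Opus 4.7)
The plan is to argue that within the remainder of the current Dance execution, which occupies $O(\log L)$ rounds of phase Progress (where $L = \max(\lambda_1, \lambda_2)$), the two agents must meet at one of the endpoints of the edge $\{u,v\}$ on which they are both dancing, where $u = Home(a) = Cot(b)$ and $v = Home(b) = Cot(a)$.

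First I would observe that the values $Home$ and $Cot$ are updated only when an agent performs an Asynch step in phase Progress. Hence the hypothesis means that agent $a$'s most recent Asynch step brought it from $v$ to $u$ and agent $b$'s brought it from $u$ to $v$; from then until the next Asynch step of either agent, $a$ is executing {\tt Dance}$(\lambda_1, v, u)$ (possibly interrupted by Correction phases) and $b$ is executing {\tt Dance}$(\lambda_2, u, v)$. Throughout this time window each agent is located at $u$, at $v$, or in the interior of edge $\{u,v\}$, so a node-rendezvous, if any, must occur at $u$ or $v$.

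The key idea is to exploit the construction of the modified labels. Since $\lambda_1 \neq \lambda_2$, the modified labels satisfy $\lambda_1^* \neq \lambda_2^*$ and neither is a prefix of the other; let $i$ be the smallest position at which they disagree, say with $\lambda_1^*[i] = 0$ and $\lambda_2^*[i] = 1$. At this bit of Stage 2, agent $a$'s Dance prescribes two idle rounds at $u$ while agent $b$'s Dance prescribes moving from $v$ to $u$ and back to $v$. In the fault-free, in-phase scenario, the arrival of $b$ at $u$ coincides with $a$ being idle at $u$, giving the desired rendezvous. The 10-round idle Stage 1 plays the role of a synchronization buffer aligning the two agents at their respective Homes before Stage 2 begins, so that their bit-indexed actions are executed essentially in lockstep.

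Finally, I would verify that adversarial delays (and the Correction phases they trigger) cannot postpone this bit-$i$ rendezvous beyond $O(\log L)$ phase-Progress rounds. For each of the agreeing bits $j < i$, the Dance actions of the two agents are mirror-symmetric across the edge $\{u,v\}$: a common $0$ keeps both agents at their respective Homes, while a common $1$ swaps them across $\{u,v\}$ and returns each of them to its Home. Hence any adversary strategy avoiding a rendezvous during bits $1, \dots, i-1$ leaves both agents at their Homes between those bits; when the adversary eventually releases $b$ to perform the bit-$i$ move, $b$ arrives at $u$ while $a$ is still idle at $u$. Crucially, the accounting of \emph{phase-Progress} rounds (as opposed to all rounds) is precisely what allows us to discount the potentially unbounded time spent in Correction phases. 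The main obstacle is making this synchronization-and-case-analysis rigorous under unbounded adversarial delays and possibly nested Correction phases; in particular, one must rule out that the agreeing bits can be exploited by the adversary to route one agent to the opposite endpoint at exactly the critical round.
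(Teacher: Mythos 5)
Your setup is right: the hypothesis does mean that $a$ is executing {\tt Dance}$(\lambda_1, v, u)$ at $u$ and $b$ is executing {\tt Dance}$(\lambda_2, u, v)$ at $v$ on the same edge $\{u,v\}$, and any node meeting must happen at $u$ or $v$. But the core of your argument rests on the claim that the two Dances run ``essentially in lockstep'', with Stage~1 acting as a synchronization buffer, so that at the first differing bit $i$ of the modified labels one agent is idle at an endpoint exactly when the other arrives there. This is where the proof breaks. The round $t$ in which $a$ arrived at $Home(a)$ and the round $t'$ in which $b$ arrives at $Home(b)$ can differ by anything up to the full length of $a$'s Dance: each agent starts its own Stage~1 immediately after its own {\tt Asynch} step, so the 10 idle rounds do not align the two agents with each other --- they preserve the offset $t'-t$. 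When $t<t'$, agent $a$ may already be deep in Stage~2, or in Stage~3, when $b$ begins; a bit-by-bit comparison at a common index then makes no sense, and the differing-bit argument covers only the special case $t=t'$. The paper handles the out-of-phase case by a different mechanism: the encoding $\Phi(0)=(0011)$, $\Phi(1)=(1100)$ guarantees that in every 10 consecutive fault-free rounds of Stage~2 the agent both idles at least once and moves at least once (Lemmas~\ref{10} and~\ref{20}), so whenever one agent sits through a 10-round idle block (its Stage~1, or Stage~1 of phase Correction) the other is forced to traverse $\{u,v\}$ into it. That density property, not the first differing bit, is what forces the meeting in general, and it is also why Stage~3 and the last bits of $\lambda^*$ need the separate subcases appearing in the paper.

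The second gap is the one you flag yourself: the interaction with faults and phase Correction is not a loose end but roughly half of the actual proof. The specific constants (20 idle rounds followed by 20 forced traversals of $\{u,v\}$ in Correction, against the 10-round granularity of Dance and the 41-round total length of a clean Correction) are chosen so that in any window where exactly one agent is blocked, the other is eventually caught traversing the edge while the first is provably idle; establishing this requires the case analysis of Claim~2 in the paper's proof and the two further cases where a Correction step occurs in round $t'$ or in the round $t''$ when the first Dance completes. Without that analysis, the claimed bound of $O(\max(\log \lambda_1,\log\lambda_2))$ phase-Progress steps is not justified, since one must show the meeting occurs by round $t''$ no matter how the adversary interleaves faults with the two executions.
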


\begin{proof}
Suppose, without loss of generality, that round $t'$ is the first round for which {$Home(a)=Cot(b)=y$ and $Home(b)=Cot(a)=x$}.
This implies that in some round $t \leq t'$ one of the agents, say $a$, came from $x$ to $y$ by applying a step of procedure {\tt Asynch}
and the following conditions are satisfied:
\begin{itemize}
\item
agent $a$ has not performed the next step of procedure {\tt Asynch} in the time interval $[t,t']$;
\item
agent $b$ came from $y$ to $x$ by applying a step of procedure {\tt Asynch} in round $t'$.
\end{itemize}
If agent $a$ completed the procedure {\tt Dance} following its last step of {\tt Asynch} before round $t'$,
then a meeting occurs in round $t'-1$, as both agents are at node $y$ in this round. Hence in the sequel we suppose that agent $a$
has not yet completed this procedure  {\tt Dance} in round $t'-1$. 

Let $t'' \geq t'$ be the first round in which one of the agents completes its procedure {\tt Dance}.
First suppose that $t'$ and $t''$ are rounds in which none of the agents is subject to a fault and in which none of the agents executes a step of phase Correction. This means that if a fault occurred in the time interval $(t,t')$, then it must have been repaired in this time interval.

\noindent
{\bf Claim 1.} If no fault occurs in the time interval $(t',t'')$, then a meeting occurs at some node by round $t''$.

In order to prove the claim, first suppose that $t=t'$. In this case the agents start executing procedure {\tt Dance} simultaneously.
By the definition of modified labels, there is an index $i$ such that bits at position $i$ of the modified labels of the agents differ.
Hence by round $t''$ one of the agents is inert at one of the nodes $x$ or $y$ while the other traverses the edge $\{x,y\}$. This
implies a meeting at $x$ or at $y$.

Next suppose that $t<t'$ and consider 3 cases.

Case 1. In round $t'$ agent $a$ has not yet finished Stage 1 of procedure {\tt Dance}.

In round $t'-1$ both agents were at node $y$, hence the meeting occurred.

Case 2. In round $t'$ agent $a$ has already finished Stage 1 of procedure {\tt Dance} but has not yet finished Stage 2.

Between round $t'+1$ and round $t'+10$ agent $b$ executes Stage 1 of procedure {\tt Dance} and hence is inert at $x$.

Subcase 2.1. Agent $a$ spends all the time interval $[t'+1,t'+10]$ in Stage 2 of procedure {\tt Dance}.

By Lemma \ref{10}, in at least one of these rounds agent $a$ traverses the edge $\{x,y\}$, while agent $b$ is inert at $x$ during all these
rounds. Hence the meeting must occur at $x$.

Subcase 2.2. Agent $a$ starts Stage 3 of procedure {\tt Dance} in the time interval $[t'+1,t'+10]$.

Hence during round $t'+10$ agent $a$ executes Stage 3 of procedure {\tt Dance} and hence traverses edge $\{x,y\}$,
while agent $b$ is inert at $x$ during this
round. Hence the meeting must occur at $x$.

Case 3. In round $t'$ agent $a$ has already finished Stage 2 of procedure {\tt Dance}.

Subcase 3.1. In round $t'$ agent $a$ has not yet finished Stage 3 of procedure {\tt Dance}.

If in round $t'$ agent $a$ is at  $x$, then the meeting occurs in this round. Otherwise,
agent $a$ moves to $x$ in round $t'+1 \leq t''$ and the meeting occurs at $x$ in this round.

Subcase 3.2. In round $t'$ agent $a$ finished Stage 3 of procedure {\tt Dance}.

Agent $b$ started  Stage 3 of its preceding procedure {\tt Dance} at node $y$ one round before agent $a$ started
Stage 3 at $y$. When agent $b$ starts treating the last bit 0 of its transformed label and hence waits at node $y$
in round $t'-13$, agent $a$ has not finished treating its penultimate bit 1 and hence traverses edge $\{x,y\}$ in this round.
If agent $a$ is at $x$ in round $t'-14$, this implies a meeting at $y$ in round $t'-13$. If agent $a$ is at $y$ in round $t'-14$,
this implies a meeting at $y$ in round $t'-14$, as $b$ was at $y$ in this round.

This concludes the proof of Claim 1.

\noindent
{\bf Claim 2.} If some faults occur in the time interval $(t',t'')$, then a meeting occurs at some node by round $t''$.

Consider two cases.

Case 1. If an agent is subject to a fault in some round in the time interval $(t',t'')$, then the other agent is also subject to a fault in this round.

In this case the rounds of executing the phase Correction are the same for both agents, hence
we can ``delete'' them and reduce the situation to the case when no faults occur in the time interval $(t',t'')$. The claim follows from Claim~1. 

Case 2. There exists a round in the time interval $(t',t'')$, in which exactly one agent is subject to a fault.

Let $r$ be the last round in the time interval $(t',t'')$, in which exactly one agent is subject to a fault. Denote by $B$ the time interval
$[r+1,r+20]$ and let $f$ be the agent subject to a fault in round $r$. {Note that $B$ is necessarily included in the time interval $(t',t'')$, since we first assume in the beginning of the proof of this lemma that if a fault occurred in the time interval $(t,t')$, then it must have been repaired in this time interval.} 

During the time interval $B$ the agent $f$ is inert because these
are the first 20 rounds of phase Correction, and hence it is not subject to any faults. $B$ is included in the time interval $(t',t'')$
because any fault occurring in the time interval  $(t',t'')$ must be repaired in this time interval. If the other agent $f'$ executes procedure 
 {\tt Dance} during all rounds of $B$, then a meeting must occur at some node, because this agent must move at least once in the time interval $B$
 by Lemma \ref{20}. Hence we may assume that there exist rounds in $B$ in which $f'$ does not execute procedure  {\tt Dance}.
 
 Agent $f'$ cannot  be subject to a fault in the time interval $B$ because it would be subject to such a fault alone, 
 contradicting the definition of round $r$. Moreover, agent $f'$ cannot be subject to a fault in round $r$ by the definition of this round.
 Hence agent $f'$ must execute during some rounds of $B$ a part of phase Correction
 caused by a fault occurred before round $r$. This implies that agent $f'$ executes some round of Stage 2 or Stage 3 of phase Correction during the time interval $B$. Since Stage 2 and Stage 3 contain only steps consisting in traversing edge $\{x, y\}$, the agents must meet
at some node by the end of time interval B.
 
 This concludes the proof of Claim 2.
 
 The two claims imply that if $t'$ and $t''$ are rounds in which none of the agents is subject to a fault and in which none of the agents executes a step of phase Correction, then a meeting must occur at some node by round $t''$. It remains to consider the case when this condition is not satisfied. Note that in rounds $t'$ and $t''$ at least one of the agents moves. Hence if a fault occurs in one of these rounds, this means
 that one of the agents traverses edge  $\{x,y\}$ while the other agent is idle at $x$ or at $y$. This implies a meeting at some node by round $t''$.
 
 Hence we may assume that an agent executes a step of phase Correction either in round $t'$ or in round $t''$.  
 
 Case 1. An agent executes a step of phase Correction in round $t'$.
 
 In this case agent $a$ executes a  step of phase Correction in round $t'$ while agent $b$ comes to $x$ from $y$ executing a step of procedure {\tt Asynch}. Suppose, for contradiction, that the agents do not meet at some node by round $t''$. In round $t'$ agent $a$ must move, otherwise a meeting occurs at some node either in round $t'-1$ or in round $t'$. Moreover, in round $t'+1$ agent $a$ must attempt to move. Indeed, either
 agent $a$ has not finished the phase Correction in round $t'$, in which case it continues a moving attempt in round $t'+1$, or 
 it finished the phase Correction in round $t'$, in which case in round $t'+1$ it resumes phase Progress where it was interrupted
 by the last fault, i.e., it also attempts to move.
 
 If agent $a$ is not subject to a fault in round $t'+1$, then it traverses edge  $\{x,y\}$ in this round. In round $t'+1$ agent $b$ executes the
 first round of procedure {\tt Dance}, hence it is idle at $x$. This implies that a meeting occurs at some node either in round $t'$ or in round $t'+1$.
 Hence we may assume that agent $a$ is subject to a fault in round $t'+1$.  In the time interval $[t'+2,t'+21]$ agent $a$ executes the
 first 20 rounds of phase Correction following this fault, i.e., it remains idle. On the other hand, agent $b$ completes 
 Stage 1 of procedure {\tt Dance} (which is a waiting period) in round $t'+10$ and makes an attempt to move in round $t'+11$.
 If it is not subject to a fault in this round, a meeting at some node must occur. Hence we may assume that agent $b$ is subject to a fault in round $t'+11$.
 This implies that in the time interval $[t'+12,t'+31]$ agent $b$ executes the
 first 20 rounds of phase Correction following this fault, i.e., it remains idle. On the other hand agent $a$ completes Stage 1 of phase Correction in round $t'+21$ and makes an attempt to move in round $t'+22$. If it is not subject to a fault in this round, a meeting
 at some node must occur,
 because $b$ is idle in this round. Hence we may assume that $a$ is subject to a fault in round $t'+22$ and executes the
 first 20 rounds of phase Correction following this fault in the time interval $[t'+23,t'+42]$, i.e., it remains idle in this time interval.
 Continuing this reasoning we conclude that none of the agents can finish its procedure {\tt Dance} by round $t''$, which is a contradiction.
 
  Case 2. An agent executes a step of phase Correction in round $t''$.
  
  Let $f$ be the agent finishing its procedure {\tt Dance} in round $t''$ and let $f'$ be the agent executing a step of phase Correction in round $t''$. Suppose, for contradiction, that the agents do not meet at some node by round $t''$. Agent $f$ moves in each round of the time interval $[t''-11, t'']$.
  Indeed, agent $f$ could not be subject to a fault in one of these rounds or execute Stage 1 of the phase Correction, because
  then it could not finish procedure {\tt Dance} in round $t''$. 
 For the same reasons, agent $f$ could not execute any steps of Stage 1 or Stage 2 of procedure {\tt Dance}. 
  Hence in these rounds agent $f$ either executes the entire Stage 3
  of procedure {\tt Dance}, or executes some rounds of Stage 2 or 3 of phase Correction and then at least one round of procedure {\tt Dance}.
  Hence in all these rounds the agent must move. It follows that in each round of the time interval $[t''-11, t'']$ agent $f'$ also moves,
  for otherwise there would be a meeting at some node by round $t''$. Hence agent $f'$ executes rounds of Stage 2 or Stage 3 of phase Correction
  in this time interval. It follows that in round $t''-21$ agent $f'$ is idle executing a step of Stage 1 of phase Correction.
  There are two subcases.
  
  Subcase 2.1. Agent $f$ does not terminate any phase Correction in the time interval $[t''-15,t'']$.
  
  During this time interval agent $f$ executes exclusively steps of procedure {\tt Dance}, which it finishes in round $t''$. Hence in round
  $t''-12$ agent $f$ executes the last step of Stage 2 of this procedure, in which it is idle. Agent $f'$ must be also idle during this round,
  otherwise a meeting occurs at some node. Since in round $t''-11$ agent $f'$ moves executing a step of the phase Correction, this implies that in round 
  $t''-12$ agent  $f'$ executes the last step of Stage 1 of phase Correction. Hence $f'$ is idle in round $t''-15$. In this round agent $f$ treats 
  the penultimate bit of its modified label in Stage 2 of procedure {\tt Dance}. This bit is 1. Hence agent $f$ traverses edge $\{x,y\}$
  in round $t''-15$. This implies a meeting at some node in round $t''-16$ or $t''-15$, which is a contradiction.
  
  Subcase 2.2. Agent $f$ terminates a phase Correction in the time interval $[t''-15,t'']$.
  
  If this happens in round $t''-2$ or earlier, then in round $t''-21$ agent $f$ executes a step of Stage 2 or Stage 3 of phase Correction.
  Hence in this round agent $f$ traverses edge $\{x,y\}$, while agent $f'$ is idle. 
  This implies a meeting at some node in round $t''-22$ or $t''-21$, which is a contradiction. Since agent $f$ cannot end phase Correction in round $t''$ by definition of this round, we may assume that it terminates phase Correction in round $t''-1$. If $f$ executed 21 rounds of movement
  in  Stage 2 and Stage 3 of phase Correction, then a meeting at some node must occur by round $t''-21$. Otherwise agent $f$ executed exactly 20 moves
  in phase Correction and hence was subject to a fault in round $t''-41$. Consider two possibilities.
  If  both  agents are always subject to faults simultaneously in the time interval $[t',t'']$, then
  both agents execute steps of phase Correction in the same rounds in the time interval $[t',t'']$, which
  contradicts the fact that in round $t''$ agent $f$ executes the last step of procedure {\tt Dance} and agent $f'$ executes a step
  of phase Correction in this round. If agents are not always subject to faults simultaneously in the time interval $[t',t'']$, then
  the contradiction is obtained using an argument similar to that from Case 2 in the proof of Claim 2.
  
  
  
  Hence we proved that a meeting at some node must occur by round $t''$. Since in the time interval $[t',t'']$ both agents
  executed $O(\max(\log \lambda_1, \log \lambda_2))$ steps of the phase Progress, the proof is complete. 
  
  \end{proof}
  
  The proof of the next lemma is analogous to that of Lemma \ref{ab}, hence we omit it.

\begin{lemma}\label{aa}
Consider two agents $a$ and $b$, with different labels $\lambda_1$ and $\lambda_2$, respectively,  starting at arbitrary different nodes of an $n$-node graph, where $n$ is unknown to the agents. Suppose that in some round $t'$ of the execution of Algorithm {\tt RV-RF} 
in the model $\cal M$ we have {$Home(a)=Home(b)$ and $Cot(a)=Cot(b)$}. Then a meeting
occurs at some node after $O(\max(\log \lambda_1, \log \lambda_2))$ steps of the phase Progress executed after round $t'$.
\end{lemma}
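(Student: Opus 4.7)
The plan is to mimic the case analysis used for Lemma \ref{ab}, modified for the geometry in which both agents are anchored at the same node $y$ rather than at opposite endpoints of the edge $\{x,y\}$. Without loss of generality, choose $t'$ to be the first round in which $Home(a)=Home(b)=y$ and $Cot(a)=Cot(b)=x$, and suppose that $b$ is the agent completing its step of {\tt Asynch} from $x$ to $y$ in round $t'$; let $t\le t'$ be the round in which $a$ arrived at $y$ from $x$ via its own step of {\tt Asynch}. If $a$ has already completed its procedure {\tt Dance}$(\lambda_1,x,y)$ by round $t'-1$ and still satisfies $Home(a)=y$ at round $t'$, then $a$ has not yet begun the next {\tt Asynch} traversal, so $a$ is at $y$ at the end of round $t'$ and meets $b$ there. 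Otherwise $a$ is still executing Dance in round $t'$ and $b$ launches its own Dance at $y$ in round $t'+1$.

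For the fault-free portion I would use a single geometric observation: in each of the three Stages of Dance the agent's position at the end of any round is either $y$ or $x$, and the agent is at $y$ at the end of at least one of every two consecutive calendar rounds (during Stage 1 it is at $y$ every round; during Stage 2 every 2-round substage terminates at $y$; during Stage 3 the agent alternates between $x$ and $y$ one round at a time). Since $b$ is idle at $y$ throughout the 10 rounds of its Stage 1, agent $a$ is therefore at $y$ at the end of at least five of these rounds, producing a meeting at $y$. In the edge case $t=t'$, where both agents start Dance in lockstep, one can alternatively appeal to the defining property of the modified labels: as $\lambda_1^*$ and $\lambda_2^*$ differ at some position $i$ and neither is a prefix of the other, at some round during Stage 2 one agent is idle at $y$ while the other traverses $\{x,y\}$, again forcing a node meeting within at most two rounds.

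To treat faults occurring in the interval $(t',t'')$, where $t''$ is the first round in which one of the agents completes its Dance, I would follow the two-claim decomposition from the proof of Lemma \ref{ab}. When both agents incur faults in identical rounds, the Correction intervals cancel out and the fault-free argument applies. When only one agent $f$ is faulty in some round $r\in(t',t'')$, agent $f$ remains idle at its current endpoint throughout the following 20 rounds of Stage 1 of Correction, and by Lemma \ref{20} the other agent, if still running Dance, must traverse $\{x,y\}$ within those 20 rounds, producing a node meeting at whichever endpoint $f$ is resting on. The boundary subcases, in which a Correction step falls on round $t'$ or round $t''$, are handled by the same exhaustive book-keeping as in Lemma \ref{ab}, which shows that any attempt to avoid the meeting would force an infinite cascade of new faults. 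The principal obstacle, as in Lemma \ref{ab}, is exactly this mixed-fault book-keeping; however, the symmetry of the anchor (both Dances at $y$) slightly simplifies matters, since the two agents' idle periods now tend to coincide at the same node. Because only a constant number of Dance and Correction phases elapse between $t'$ and the guaranteed meeting, the total number of phase Progress steps is $O(\max(\log\lambda_1,\log\lambda_2))$.
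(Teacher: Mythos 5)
Your proposal is correct and takes exactly the route the paper intends: the paper omits this proof entirely, stating only that it is ``analogous to that of Lemma~\ref{ab}'', and your write-up is a sound instantiation of that analogy. Your key observation --- that with both agents anchored at $y$, a fault-free dancing agent is at $y$ at the end of at least one of every two consecutive rounds, so $b$'s ten idle rounds of Stage~1 force a node meeting --- is correct and rightly simplifies the fault-free case relative to Lemma~\ref{ab}, while your treatment of the fault cases mirrors the two-claim decomposition (identical fault rounds cancel; otherwise the last solitary fault pins one agent for 20 rounds while Lemma~\ref{20} or Correction Stages 2--3 force the other to traverse $\{x,y\}$) as required.
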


The last lemma of this section shows that the meeting between agents in model $\cal M$ is guaranteed after a polynomial number of steps in phase Progress.

\begin{lemma}\label{final}
Consider two agents $a$ and $b$, with different labels $\lambda_1$ and $\lambda_2$, respectively,  starting at arbitrary different nodes of an $n$-node graph, where $n$ is unknown to the agents. There exists a polynomial $B$ such that a meeting at some node is guaranteed after the execution of a total of $B(n,\max(\log \lambda_1, \log \lambda_2))$ steps of phase Progress of  Algorithm {\tt RV-RF} 
by both agents in the model $\cal M$.
\end{lemma}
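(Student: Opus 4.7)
The plan is to combine Lemma \ref{5 events} with Lemmas \ref{ab} and \ref{aa}. First I would invoke Lemma \ref{5 events}: once the agents have executed $A(n,\min(\log \lambda_1, \log \lambda_2))$ stages of phase Progress of Algorithm {\tt RV-RF} in the model $\cal M$, at least one of the five events must have occurred. Each such stage consists of one step of {\tt Asynch} followed by one execution of procedure {\tt Dance}, which together cost $O(\max(\log \lambda_1, \log \lambda_2))$ rounds, so this first part of the argument accounts for only a polynomial in $n$ and $\max(\log \lambda_1, \log \lambda_2)$ number of phase Progress steps.

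Next, I would show that each of the five events is either itself a meeting at a node, or produces a configuration to which Lemma \ref{ab} or Lemma \ref{aa} applies within $O(\max(\log \lambda_1, \log \lambda_2))$ further steps of phase Progress. Concretely, Event 1 (a crossing inside an edge $\{u,v\}$) leaves the two agents with swapped positions, so that $Home(a) = v = Cot(b)$ and $Home(b) = u = Cot(a)$, matching the hypothesis of Lemma \ref{ab}. Event 2 is a meeting at a node by definition, since the never-moved agent is still at its starting node and the other arrives there via {\tt Asynch}. Event 3 sets $Home(a) = Home(b) = v$ and $Cot(a) = Cot(b) = u$, matching Lemma \ref{aa}. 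For Events 4 and 5 the two agents momentarily share the home $v$ but have different cottages $u$ and $w$; as soon as one of them executes its next {\tt Asynch} step before the other does, the configuration becomes the swapped one of Lemma \ref{ab} (a short case analysis over the orderings allowed by the definition of each event confirms this), so a meeting follows.

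The main obstacle I anticipate is the corner case in which both agents execute their next {\tt Asynch} step in the very same round, so that no round $t'$ satisfying the hypothesis of Lemma \ref{ab} or Lemma \ref{aa} ever materialises after the event. In this situation I would argue directly, mirroring the fine-grained case analysis of Lemma \ref{ab}: throughout the overlap of the two {\tt Dance} executions at the common home $v$ the transformed labels of the agents differ in at least one bit position, so by a dissection into subcases according to which stage of {\tt Dance} each agent is currently in, and using Lemma \ref{10} (idleness and movement appear in every ten-round window of Stage 2) together with the uninterrupted back-and-forth of Stage 3, one can exhibit a round in which one agent is idle at $v$ while the other is also at $v$, or in which one agent traverses its Dance edge while the other is idle at one of its endpoints. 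Putting everything together, the total phase Progress cost is bounded by $A(n,\min(\log \lambda_1, \log \lambda_2)) \cdot O(\max(\log \lambda_1, \log \lambda_2)) + O(\max(\log \lambda_1, \log \lambda_2))$, which gives the desired polynomial $B(n, \max(\log \lambda_1, \log \lambda_2))$.
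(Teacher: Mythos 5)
Your overall route is exactly the paper's: invoke Lemma~\ref{5 events} after $A(n,\min(\log \lambda_1, \log \lambda_2))$ stages, bound the number of phase Progress steps per stage by $O(\max(\log \lambda_1, \log \lambda_2))$ because of {\tt Dance}, and then dispatch Event~1 to Lemma~\ref{ab}, Event~2 as an immediate meeting, Event~3 to Lemma~\ref{aa}, and Events~4 and~5 to Lemma~\ref{ab} once one agent makes its next {\tt Asynch} step before the other. All of that matches the paper's proof, including the final cost bound $B=B^*+O(\max(\log \lambda_1, \log \lambda_2))$.

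The one place you diverge is the corner case of Events~4 and~5 in which both agents perform their next {\tt Asynch} step out of $v$ in the same round $m$. You propose to handle it by redoing a fine-grained {\tt Dance} analysis at the shared home $v$; this is both unnecessary and, as sketched, not quite sound, because in Events~4 and~5 the two agents dance on \emph{different} edges ($\{u,v\}$ and $\{w,v\}$), so your criterion ``one agent traverses its Dance edge while the other is idle at one of its endpoints'' does not yield a meeting unless that common endpoint is $v$ and both agents are there simultaneously. The paper's resolution is a one-liner: if both agents leave $v$ by an {\tt Asynch} step in round $m$, then both were at $v$ at the end of round $m-1$, which is already a meeting at a node in model $\cal M$; the step count is controlled because each agent has executed at most $O(\max(\log \lambda_1, \log \lambda_2))$ steps of {\tt Dance} since arriving at $v$. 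With that substitution your argument is complete and coincides with the paper's.
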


\begin{proof}
Each stage of phase Progress of  Algorithm {\tt RV-RF}  consists of one step of  procedure {\tt Asynch}  and one execution of procedure {\tt Dance}, and the
execution time of procedure {\tt Dance} is logarithmic in the label of the executing agent. Hence after executing a total of $A(n,\min(\log \lambda_1, \log \lambda_2))$ stages of phase Progress the agents with labels $\lambda_1$ and $\lambda_2$ executed a total of
at most 
$$B^*(n,\max(\log \lambda_1, \log \lambda_2))=A(n,\min(\log \lambda_1, \log \lambda_2)) \cdot c \cdot \max(\log \lambda_1, \log \lambda_2)$$ 
steps of phase Progress, for some positive constant $c$. By Lemma \ref{5 events}, after executing  a total of $A(n,\min(\log \lambda_1, \log \lambda_2))$ stages of phase Progress
by both agents, one of the 5 events must have occurred.

If event 1 occurred, the conclusion follows from Lemma \ref{ab} for
\begin{equation}
{B(n,\max(\log \lambda_1, \log \lambda_2))=B^*(n,\max(\log \lambda_1, \log \lambda_2))+O(\max(\log \lambda_1, \log \lambda_2))}
\label{equa}
\end{equation} 
If event 2 occurred, the conclusion is immediate. If event 3 occurred,
the conclusion follows from Lemma \ref{aa}, for $B$ given by {\eqref{equa}}. If event 4 occurred, there are two cases.

Case 1. Both agents leave node $v$ in the same round applying a step of procedure {\tt Asynch}.

In this case the agents were together at $v$ in the previous round, hence rendezvous occurred and the conclusion follows from the fact that each agent has executed at most $c\cdot \max(\log \lambda_1 , \log \lambda_2)$ steps of procedure {\tt Dance} since it arrived at $v$. 

Case 2. One of the agents leaves node $v$ applying a step of procedure {\tt Asynch} in a round $m$ before the other agent
leaves node $v$ applying a step of procedure {\tt Asynch}.

In this case in round $m+1$ we have $Home(a)=Cot(b)$ and $Home(b)=Cot(a)$ and hence the conclusion follows from Lemma \ref{ab},
for $B$ given by {\eqref{equa}}.

Finally, if event 5 occurred, there are two cases.

Case 1. Both agents leave node $v$ in the same round applying a step of procedure {\tt Asynch}.

In this case the agents were together at $v$ in the previous round, hence rendezvous occurred and the conclusion follows from the fact that each agent has executed at most $c\cdot \max(\log \lambda_1 , \log \lambda_2)$ steps of procedure {\tt Dance} since it arrived at $v$.

Case 2. Agent $a$ leaves node $v$ applying a step of procedure {\tt Asynch} in a round $m$ before agent $b$
leaves node $v$ applying a step of procedure {\tt Asynch}.

In this case in round $m+1$ we have $Home(a)=Cot(b)$ and $Home(b)=Cot(a)$ and hence the conclusion follows from Lemma \ref{ab},
for $B$ given by  {\eqref{equa}}.

\end{proof}

We are now ready to prove the main result of this section, showing that Algorithm {\tt RV-RF} achieves rendezvous at polynomial cost 
with very high probability, under the random fault model.

\begin{theorem}
Consider two agents $a$ and $b$, with different labels $\lambda_1$ and $\lambda_2$, respectively,  starting at arbitrary different nodes of an $n$-node graph, where $n$ is unknown to the agents. Suppose that delay faults occur randomly and independently with constant probability
$0<p<1$ in each round and for each agent.
Algorithm {\tt RV-RF} guarantees rendezvous of the agents with probability 1. Moreover, there exists a polynomial $B$ such that rendezvous at some node occurs at cost $\tau=O(B(n,\max(\log \lambda_1, \log \lambda_2)))$ with probability at least $1-e^{-O(\tau)}$.
\end{theorem}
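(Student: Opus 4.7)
The plan is to build on Lemma~\ref{final}, which asserts that in the auxiliary adversarial model $\cal M$ a meeting at a node is guaranteed once both agents have executed $T := B(n, \max(\log \lambda_1, \log \lambda_2))$ stages of phase Progress. First I would observe that any realization of the random faults is a legitimate behaviour for the adversary of $\cal M$ (who is allowed to delay an agent for any finite number of consecutive rounds), and that a meeting at a node in $\cal M$ corresponds exactly to rendezvous in the synchronous random-fault model, in which the agents additionally notice such a meeting and stop. Hence rendezvous happens deterministically (whatever the random outcomes) no later than the round in which each agent has completed its first $T$ stages of phase Progress.

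It remains to bound the cost, i.e., the number of edge traversals performed up to that point. Each Progress stage contributes one traversal for the {\tt Asynch} step plus $O(\log L)$ traversals from procedure {\tt Dance} (Stage~1 of {\tt Dance} is purely idle, while Stages~2 and~3 together involve $O(\log L)$ traversals, writing $L=\max(\lambda_1,\lambda_2)$). In addition, every fault encountered by an agent triggers a phase Correction, which contributes at most $21$ edge traversals when it is not itself interrupted. A single Correction completes without interruption with probability at least $(1-p)^{20}$, a positive constant, so on expectation only a constant number of Correction attempts is needed per triggering fault, with exponentially decaying tails in the number of attempts.

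A Chernoff bound then yields the high-probability estimate. Over the execution up to the completion of $T$ Progress stages by each agent, the number of faults (each occurring independently with probability $p$ per moving round) concentrates around its mean, as does the total Correction overhead by the geometric argument just described. Putting these together, the total cost is $\tau=O(T\log L)$, a polynomial $B'$ in $n$ and $\log L$; and the probability of the cost exceeding a sufficiently large constant multiple of $B'$ is $e^{-\Omega(\tau)}$. Rendezvous with probability $1$ follows immediately from this tail estimate by letting $\tau\to\infty$.

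The main obstacle I anticipate is the bookkeeping of chained Corrections, since a phase Correction can itself be interrupted by a further fault, restarting the phase and potentially cascading arbitrarily. The clean way around this is the geometric argument above: each Correction attempt succeeds with constant probability independently of the past, so the expected cost per triggering fault collapses to a constant, and Chernoff concentration across the polynomially many such events suffices to obtain the exponentially small tail on the total cost.
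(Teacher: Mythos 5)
Your proposal is correct in outline and, like the paper, reduces everything to Lemma~\ref{final}: any realization of the random faults is a legitimate behaviour of the adversary in model $\cal M$, so a node-meeting (hence rendezvous, since in the real model the agents notice it and stop) is forced as soon as the agents have accumulated $B(n,\max(\log \lambda_1, \log \lambda_2))$ steps of phase Progress, and the only remaining task is to show this happens at polynomial cost with the claimed probability. Where you genuinely diverge is in that last probabilistic step. You account for the overhead directly: each fault during {\tt Dance} spawns a chain of Correction attempts whose length is geometric with constant success probability at least $(1-p)^{21}$ (a fault-free Correction contains up to $21$ moving rounds, not $20$; Stage 1 is idle and immune to faults), and you then concentrate the sum of these geometric overheads together with the number of faults. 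This can be made to work, but it has a bookkeeping wrinkle you should make explicit: the number of faults is a function of the number of moving rounds, which in turn depends on the Correction overhead, so ``concentrates around its mean'' needs either a stopping-time argument or stochastic domination by a binomial over a horizon fixed in advance. The paper sidesteps this circularity with a cleaner device: it fixes the horizon $\tau=c\cdot B(\cdot)$ up front, partitions it into disjoint blocks of $42$ rounds, and calls a block \emph{clean} if neither agent incurs a fault in it; since a complete fault-free Correction lasts at most $41$ rounds, every clean block forces at least one new Progress step for each agent regardless of the state at its start, and a single Chernoff bound on the number of clean blocks (each clean for both agents with probability at least $(1-p)^{84}$) finishes the proof. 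Your route buys a somewhat sharper cost estimate (overhead proportional to the faults actually incurred) at the price of the extra care above; the paper's route is coarser but immune to the dependency issue. Both yield rendezvous with probability $1$ and the $1-e^{-O(\tau)}$ tail.
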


\begin{proof}
An entire execution of phase Correction of Algorithm {\tt RV-RF}, if no fault occurs during it, lasts at most 41 rounds.
Hence in a segment of 42 consecutive rounds without a fault an agent makes at least one step of phase Progress.
Call a segment of 42 consecutive rounds without a fault occurring to a given agent {\em clean} for this agent. For any round
$r$ the probability that the segment starting at $r$ is clean for both agents is at least $q=(1-p)^{84}$.
Let $B$ be the polynomial from Lemma \ref{final}. The probability of the
existence of $B(n,\max(\log \lambda_1, \log \lambda_2))$ pairwise disjoint clean segments is 1, and this event implies rendezvous
by Lemma \ref{final}.

Let $c=\lceil 84/q \rceil$. Consider $\tau=c\cdot B(n,\max(\log \lambda_1, \log \lambda_2))$ rounds of execution of Algorithm {\tt RV-RF}.
Partition these rounds into $\lceil 2/q \rceil B(n,\max(\log \lambda_1, \log \lambda_2))$ pairwise disjoint segments of length 42.
 By Chernoff bound, there are 
at least $B(n,\max(\log \lambda_1, \log \lambda_2))$ clean segments
among them, with probability $1-e^{-O(\tau)}$.
This concludes the proof.
\end{proof}

We close this section by sketching an alternative solution to the problem of rendezvous at polynomial cost, with very high-probability, in the presence of random delay faults.
This alternative solution relies on the well-known fact that two random walks operating in an $n$-node network, without faults, meet after polynomial time with probability $1-n^{-c}$, for some positive constant $c$. This follows, e.g., from the fact that the cover time of a random walk is polynomial in any graph and from \cite{aldous}. By suitably changing the polynomial, the probability can be raised to very high, i.e., the {error probability decreased} to inverse-exponential
in $n$. The problem with applying this fact is that it concerns random walks, i.e., an algorithm in which {the agents} have access to random bits, while we want a deterministic algorithm working
with very high probability in the presence of random independent delay faults. Thus the issue is how to simulate a random walk by a deterministic algorithm in this faulty environment. Not surprisingly, this
can be done by harvesting  randomness from the random independent faults occurring in the network.

We will use a procedure called {\em unbiased random bit production} ({\tt URBP}), which is executed by an agent.  Let $u$ be a position of the
agent and let $v$ be an adjacent node. The procedure works in two rounds as follows.  In each round the agent attempts a move along the edge $\{u,v\}$. 
If exactly one of these two attempts is successful, the procedure outputs bit 1 if the first attempt is successful and outputs bit 0 if the second attempt is successful.
Otherwise, the procedure outputs nothing.
 Clearly, a bit obtained in this way is unbiased (has probability 1/2).

Fix an integer $n$ and let $Q(n)$ be a polynomial such that two random walks meet in any $n$-node network, with very high probability, after time at most $Q(n)$. The deterministic algorithm based on random walks works in epochs numbered
by consecutive positive integers $n$. Each epoch is divided into two parts: {\em bit preparation} and {\em execution}. {The bit preparation part of the $n$th epoch consists of repeating
procedure {\tt URBP} $nk \cdot Q(n)$ times, where $k=\lceil \log n \rceil$. By Chernoff bound, for a sufficiently large constant $q$ depending on fault probability $p$,
repeating procedure {\tt URBP} $qk \cdot Q(n)$ times is enough in order to produce $k \cdot Q(n)$ random independent unbiased bits with very high probability. Hence, for $n$ large enough,  at least $k \cdot Q(n)$
such bits are obtained with very high probability at the end of the bit preparation part of the $n$th epoch.}

The execution part of the $n$th epoch {is} as follows. In a given round of this epoch some prefix of the sequence of $k \cdot Q(n)$ bits prepared in the bit preparation part is already {\em used}. 
The agent, currently located at a node of degree $d$ takes the first unused $r$ bits, where $r=\lceil \log d \rceil$. If this string of bits is the binary representation of an integer
 $i<d$, then the agent attempts a move by port $i$. Otherwise the agent stays at the current node. The string of $r$ bits is added to the prefix of used bits. The execution
 part of the $n$th epoch lasts $Q(n)$ rounds. If the total string of prepared bits is of length at least $k \cdot Q(n)$, there are enough bits to execute $Q(n)$ rounds of the execution phase, regardless of the degrees of the nodes.
If there are not enough bits to perform the execution part, the agent waits inert at the node at which it ran out of the bits to perform the rest of the random walk, until $Q(n)$ rounds of the execution part have elapsed.  
If rendezvous does not occur by the end of the $n$th epoch, the agent starts the $(n+1)$th epoch.
 
 It follows from the above description that the execution part of the $n$th epoch is a random walk in which the agent, currently located at a node of degree $d$
 takes each port with equal probability $(1-p)/2^{\lceil \log d \rceil}$ and stays inert with probability $1-d(1-p)/2^{\lceil \log d \rceil}$.
 Let $n$ be the size of the graph (unknown to the agents). If the agent woken up earlier
  executes its $n$th epoch before the later agent is woken up, it will meet this agent at its starting node with very high probability, at polynomial cost.
{Otherwise, when the later agent wakes up, the earlier agent executes a round of the $k$-th epoch for some $k\leq n$. Hence, there exists a polynomial $K(n)$ for which by the time the earlier agent executes the last round of the $K(n)$th epoch, both agents performed simultaneously at least $Q(n)$ consecutive rounds of the random walk during execution parts of some epochs, with very high probability, leading to a meeting with very high probability.} The total cost incurred by both agents until then is polynomial in $n$.
  
  We would like to argue that, while both methods solve the problem of rendezvous with random delay faults, our main method, i.e., Algorithm {\tt RV-RF}, has the advantage of holding
  under more general assumptions concerning faults.
  Indeed, it is based on the fact that, with random delay faults,
  each agent can execute
  with very high probability,  sufficiently many times a block of at least $42$ consecutive rounds without being subject to any fault. This  permits the agents to execute sufficiently many steps of phase Progress to achieve
rendezvous because an entire execution of phase Correction consists of at most $41$ rounds. In fact, Algorithm {\tt RV-RF} also works correctly at polynomial cost
in any network in which delay faults are not random but occur rarely, say never more frequently than once every 50 rounds. In this case Algorithm {\tt RV-RF}
 {\em always}  guarantees a meeting at polynomial cost, while the method based on random walks cannot be used.
 
 {Another advantage of Algorithm {\tt RV-RF} over  the algorithm based on harvesting randomness is the following.\footnote{{We are grateful to the anonymous referee for pointing this out.}}
 Algorithm {\tt RV-RF} has the intuitively desirable property that it works faster as the fault probability $p$ decreases, while harvesting randomness for very small $p$ is very slow.}

\section{Unbounded adversarial faults}

In this section we consider the scenario when the adversary can delay each of the agents for any finite number of consecutive rounds.
Under this scenario the time (number of rounds until rendezvous) depends entirely on the adversary, so the only meaningful measure
of efficiency of a rendezvous algorithm is its cost. However, it turns out that, under this harsh fault scenario, even feasibility of rendezvous is usually
not guaranteed, even for quite simple graphs. Recall that we {\em do not} assume knowledge of any upper bound on the size of the graph. 

In order to prove this impossibility result, we introduce the following terminology.
For any rendezvous algorithm $\cA$ in a graph $G$, a {\em solo} execution of this algorithm by an agent $A$ 
is an execution in which $A$ is alone in the graph. Note that, in an execution of algorithm $\cA$ where two
agents are present, the part of this execution by an agent before the meeting is the same as the respective part of a solo execution of the algorithm by the respective agent. 

A port labeling in a graph is called {\em homogeneous}, if port numbers at both endpoints of each edge are equal.

{\begin{remark}
\label{rem:remark1}
{
Let $\cF$ be a family of regular graphs of degree $d$, with homogeneous port labeling. Consider a solo execution of a rendezvous algorithm $\cA$ by an agent $A_{\lambda}$ with label $\lambda$, in a graph $G\in \cF$. In each round $r$ the agent may either decide to stay at the current node or try to traverse an edge by choosing some port. This decision depends on the history of the agent before round $r$, i.e., on the entire
knowledge it has in this round.
In view of the regularity of the graph, and since the port labeling is homogeneous, the agent does not learn anything about the graph during navigation: it can differentiate neither $G$ from any graph $G'\in \cF$, nor any two nodes in $G$. It follows that the history of the agent before round $r$ can be coded by the sequence of previous round numbers in which the agent made a move.
(The agent made a move in a previous round, if and only if, it tried to traverse an edge and the adversary allowed the move by not imposing a fault in this round.)
Hence, if the algorithm, the agent's label and the adversary's behavior are fixed,  the agent's history, in any round,
is necessarily the same in any graph of $\cF$ and for any starting node.  }
\end{remark}}

Consider the decisions of agent  $A_{\lambda}$
starting in some round $t$ and before its next move.
We say that the agent {\em attacks} since round $t$, {if, from round $t$ on,}
it tries an edge traversal in each round until it makes the next move. {Note that, within a given attack, the edge which the attacking agent attempts to traverse, is not always necessarily the same.}
{We say that the agent does not attack after round $t$, if there does not exist a round $t'>t$ in which the agent starts attacking. In other words} it means that, {after round $t$, the agent} 
decides to stay at the current node in rounds with arbitrarily large numbers, interleaved with possible attempts at a move,
all of which can be prevented by an adversary.
Note that an adversary must eventually allow an agent that attacks to make a next move but it
is capable of preventing {an agent that never attacks after a given round} from making any further move by imposing faults in all rounds in which the agent tries an edge traversal.


We first prove the following technical lemma.

\begin{lemma}\label{lem-adversaire}
Let $\cF$ be a family of regular graphs of degree $d$, with homogeneous port labeling, $\cA$ any rendezvous algorithm
working for graphs in  $\cF$, and $M$ any positive integer.
Then there exists {a behavior of the adversary inducing}
an infinite increasing sequence $\tau_1, \tau_2, \ldots, \tau_i, \ldots$ of positive integers, such that, for any graph $G \in \cF$ and 
for any $\lambda \leq M$, there exists a solo execution of algorithm $\cA$ in $G$  by agent $A_{\lambda}$ with label $\lambda$,
such that one of the following conditions is satisfied:

\begin{itemize}
\item {either  $A_{\lambda}$ attacks only a finite number $k\geq 0$ of times,
in which case $A_{\lambda}$ either does not move if $k=0$, or moves exactly in rounds 
$\tau_1,\tau_2, \ldots,
\tau_k$ otherwise;}
\item or $A_{\lambda}$ attacks infinitely many times, in which case it moves exactly in rounds $\tau_i$, for all positive integers $i$.
\end{itemize}
\end{lemma}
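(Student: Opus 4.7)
The plan is to build the sequence $\tau_1<\tau_2<\cdots$ by induction on $i$, and simultaneously, for each label $\lambda\leq M$, to prescribe an adversary strategy realizing the solo execution claimed in the lemma. The crucial observation, furnished by Remark~\ref{rem:remark1}, is that in any solo execution the decisions of $A_\lambda$ depend only on the sequence of rounds in which it has previously moved; neither the choice of $G\in\cF$ nor the starting node plays any role. Thus, once a candidate prefix $\tau_1,\ldots,\tau_{i-1}$ has been fixed, the entire future behaviour of $A_\lambda$ under the ``block everything'' adversary is a deterministic function of this prefix alone, and can be reasoned about in the abstract.

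The inductive step proceeds as follows. Given $\tau_1<\cdots<\tau_{i-1}$, I would simulate, for each $\lambda\leq M$, the solo execution of $A_\lambda$ in which the agent has moved exactly at rounds $\tau_1,\ldots,\tau_{i-1}$ and from round $\tau_{i-1}+1$ on the adversary blocks every attempt. Call $\lambda$ \emph{alive} at step $i$ if in this simulation there exists a smallest round $s^{(i)}_\lambda>\tau_{i-1}$ at which $A_\lambda$ starts an attack, and \emph{dead} otherwise. If no $\lambda\leq M$ is alive, set $\tau_i=\tau_{i-1}+1$; otherwise set
\[\tau_i \;=\; \max\{s^{(i)}_\lambda : \lambda\leq M,\ \lambda\ \text{alive at step}\ i\},\]
a finite number since it is the maximum over a nonempty subset of $\{1,\ldots,M\}$. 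For each $\lambda\leq M$, the adversary is then the strategy that blocks every move attempt of $A_\lambda$ except exactly those corresponding to the rounds $\tau_i$ at which the agent is still alive. For an alive $\lambda$ at step $i$, the agent is attacking from round $s^{(i)}_\lambda\leq\tau_i$, so releasing the $i$-th move at round $\tau_i$ is legitimate; for a dead $\lambda$ at step $i$, the agent never commits to a sustained attack after round $\tau_{i-1}$, so every sporadic attempt can safely be blocked forever.

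Finally, I would verify the two alternatives by a routine induction: if step $k+1$ is the smallest index at which $\lambda$ becomes dead, then $A_\lambda$ has moved exactly in rounds $\tau_1,\ldots,\tau_k$ and never moves again, matching the first bullet with parameter $k$; and if $\lambda$ stays alive at every step, $A_\lambda$ moves exactly in rounds $\tau_1,\tau_2,\ldots$, matching the second bullet. The main obstacle I anticipate is precisely the requirement that one sequence $\tau_1,\tau_2,\ldots$ serve all $M$ agents simultaneously, even though each agent's solo execution uses its own adversary. The choice of taking a \emph{maximum} over alive labels is what resolves this: it guarantees that by round $\tau_i$ every alive $\lambda$ has already committed to its $i$-th attack, so the adversary may legally release that move then; the finiteness of $\{1,\ldots,M\}$ is what keeps the maximum finite in the first place, and hence makes the whole construction possible.
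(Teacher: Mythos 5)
Your proposal is correct and follows essentially the same route as the paper: an inductive construction of the $\tau_i$ in which a ``tough'' adversary blocks every non-attacking attempt forever, and the $i$-th release round is taken as the maximum, over the (finitely many, hence bounded) labels still attacking, of the rounds at which their $i$-th attack begins. Your explicit handling of the case where no label is alive (setting $\tau_i=\tau_{i-1}+1$ to keep the sequence infinite) is a minor tidying of an edge case the paper leaves implicit, but the argument is otherwise the same.
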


\begin{proof}
{Fix a rendezvous algorithm $\cA$, a  degree $d$ of graphs in a family $\cF$, and a positive integer $M$.
Consider the set $S$ of agents $A_{\lambda}$ with label $\lambda \leq M$, and suppose that each of these agents executes $\cA$ alone in some graph $G \in \cF$.
Note that, for a given behavior of an adversary and a given label, this execution will be the same in any graph $G \in \cF$ (cf. Remark~\ref{rem:remark1}).}
We construct the sequence $\tau_1, \tau_2, \ldots, \tau_i, \ldots$ and the behavior of an adversary by induction on $i$.
We start by giving a partial description of this behavior. For any agent in $S$, the adversary imposes a fault in every round in which
the agent does not attack but tries to traverse an edge. An adversary behaving in this way will be called {\em tough}.

If an agent does not attack before its first move, then the agent never moves, in view of the toughness of the adversary. 
Consider the set $S_0 \subseteq S$ of agents that attack before their first move.
Let $s_1<\cdots <s_r$ be the rounds in which this attack starts for some agent in $S_0$. Define
$\tau_1=s_r$. The adversary allows a move of all agents from the set $S_0$ in round $\tau_1$. 


For the inductive step,
consider the set  $S_{i-1}\subseteq S$ of agents that attacked $i>0$ times and made $i$ moves, exactly in rounds $\tau_1, \tau_2, \ldots, \tau_i$.
If an agent from $S_{i-1}$  does not attack after its $i$-th move in round $\tau_i$, then the agent never moves again, in view of the toughness of the adversary.
Consider the set $S_i \subseteq S_{i-1}$ of agents that attack after round $\tau_i$ and before the next move.
Let $t_1<\cdots <t_p$, where $t_1>\tau_i$, be the rounds in which the $(i+1)$-th attack starts for some agent in $S_i$. Define
$\tau_{i+1}=t_p$. The adversary allows a move of all agents from the set $S_i$ in round $\tau_{i+1}$.

This concludes the definition of the sequence $\tau_1, \tau_2, \ldots, \tau_i, \ldots$, and of the behavior of the adversary, by induction on $i$. 
In order to prove that one of the conditions in the lemma must be satisfied, consider the solo execution $\cE_{\lambda}$ of an agent  $A_{\lambda}$ with label $\lambda \leq M$, in some graph $G \in \cF$,
against the above defined adversary. There are two possible cases: either, in the execution $\cE_{\lambda}$, the agent attacks $k$ times, where $k$ is a non-negative integer, or it attacks infinitely many times. By the definition of the adversary, in the first case the agent {does not move or} moves exactly in rounds 
$\tau_1,\tau_2, \ldots,\tau_k$, and in the second case it moves exactly in rounds $\tau_i$, for all positive integers $i$.

\end{proof}

We now establish the impossibility result for the model with unbounded faults.

\begin{theorem}\label{impossible}
Rendezvous with unbounded adversarial faults is not feasible, even in the class of rings.
\end{theorem}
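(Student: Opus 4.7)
The plan is to apply Lemma~\ref{lem-adversaire} to the family $\cF$ of even cycles $C_n$ equipped with the (essentially unique) homogeneous port labeling alternating $0,1,0,1,\dots$ around the ring; this labeling is consistent exactly when $n$ is even, and turns $C_n$ into a $2$-regular bipartite graph with homogeneous ports, fitting the lemma's hypotheses. Fix any candidate rendezvous algorithm $\cA$, invoke Lemma~\ref{lem-adversaire} with $M=3$, and obtain a single adversarial behaviour together with an infinite sequence $\tau_1,\tau_2,\dots$ of potential move-rounds. By the pigeonhole principle, two of the three labels $\{1,2,3\}$ must fall into the same case of the lemma (either both attack finitely often, or both attack infinitely often); fix such a pair $\lambda_1 \neq \lambda_2$.

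The core of the argument is a bootstrap: by Remark~\ref{rem:remark1}, if the two agents have not yet simultaneously occupied the same node up to some round $t$, then each agent's observable history up to $t$ coincides with that of its solo execution against the same adversary, so its round-$(t{+}1)$ action also coincides. (Edge crossings, where the two agents traverse the same edge in opposite directions during the same round, do \emph{not} count as a meeting in our model and therefore do not break this coupling.) It thus suffices to exhibit an even $n$ and distinct starting vertices $u_1,u_2\in C_n$ for which, along the two solo trajectories synchronized on the common schedule $(\tau_i)$, the two agents never simultaneously occupy the same vertex.

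If $\lambda_1,\lambda_2$ are both in the infinite case, both agents move at every round $\tau_i$, and each move flips the agent's side in the bipartition of $C_n$. Choosing $u_1,u_2$ in different parts of the bipartition (possible in any even $n\geq 4$, e.g.\ $u_1=0,u_2=1$ on $C_4$) keeps the relative side invariant, so the two agents always lie in different parts and never coincide. If $\lambda_1,\lambda_2$ are both in the finite case with solo move-counts $k_1,k_2<\infty$, each trajectory stabilises after finitely many steps and the set of relative displacements $S=\{P_1(\min(i,k_1))-P_2(\min(i,k_2))\bmod n : i\geq 0\}$ has at most $k_1+k_2+1$ elements; picking any even $n$ strictly larger than this bound and any offset $u_2-u_1\bmod n$ outside $S$ keeps the two positions distinct at every round. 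In both cases $\cA$ fails to rendezvous on the constructed instance, giving the desired impossibility.

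The main delicate step is a rigorous verification of the bootstrap principle: one must check inductively that the solo classification of each label (finite vs.\ infinite) correctly predicts the movement pattern of each agent in the two-agent execution as long as no node-meeting has occurred, and that simultaneous edge-crossings (which swap the two agents across an edge but do not place them at the same node) do not disturb the coupling with solo executions. A minor additional point is to fix identical wake-up rounds for both agents (an allowed choice of the adversary) so that their clocks are synchronized with the common schedule $(\tau_i)$ provided by the lemma.
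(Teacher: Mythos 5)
Your proposal is correct and follows essentially the same route as the paper: the same family of even rings with homogeneous ports, the same invocation of Lemma~\ref{lem-adversaire} with $M=3$ and pigeonhole over the finite/infinite attack dichotomy, the bipartite-parity argument in the infinite case, and a move-counting argument in the finite case (the paper phrases the latter as placing the agents at antipodal distance $k+k'+1$ in a ring of size $2(k+k'+1)$, which is equivalent to your displacement-set bound). The "bootstrap" coupling with solo executions that you flag as delicate is exactly the observation the paper makes when defining solo executions, so nothing further is needed.
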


\begin{proof}

{Let $\cF$ be the family of rings of even size with homogeneous port numbering. Since graphs in $\cF$ are regular graphs of degree $2$, Lemma~\ref{lem-adversaire} applies.
Suppose that $\cA$ is a rendezvous algorithm working for the family $\cF$, and let $M=3$.
Let $X=(\tau_1, \tau_2, \ldots, \tau_i, \ldots)$ be the sequence of integers given by Lemma \ref{lem-adversaire}, and
consider the solo execution $\cE_j$ of agent $A_j$ with label $j$, for $j=1,2,3$, corresponding to $X$ and satisfying one of the two conditions of Lemma~\ref{lem-adversaire}. (For a fixed $j\leq 3$, this execution 
is the same in each ring from $\cF$, regardless of the starting node.)
Consider two cases.}

{Case 1. At least two among agents $A_j$ attack infinitely many times in execution $\cE_j$.}

{Without loss of generality, assume that $A_1$  and $A_2$ attack infinitely many times, each in their solo execution $\cE_1$ (resp. $\cE_2$).
Consider the execution $\cE$ in which agents $A_1$ and $A_2$ start simultaneously at an odd distance in the graph, and the adversary acts against each of them as in their respective solo execution $\cE_1$ and $\cE_2$. In execution
$\cE$, both agents make moves in exactly the same rounds, i.e., the rounds of sequence $X$. Since an even ring is bipartite, the parity of their distance never changes, and they remain at an odd distance forever. Hence they never meet, which is a contradiction.}

{Case 2. At least two among agents $A_j$ attack finitely many times in execution $\cE_j$.}

{Without loss of generality, assume that $A_1$ and $A_2$ attack finitely many times, each in their solo execution $\cE_1$ (resp. $\cE_2$): agent $A_1$ attacks $k$ times and
agent $A_2$ attacks $k'$ times. Choose as the starting nodes of these agents antipodal nodes in the ring
of size $2(k+k'+1)$ (i.e., at distance $k+k'+1$), start them simultaneously, and assume that the adversary acts against each of them as in their respective solo execution $\cE_1$ and $\cE_2$. By Lemma \ref{lem-adversaire}, agent $A_1$ will make a total number of $k$ moves, and agent $A_2$
will make a total number of $k'$ moves, in this execution. Hence they never meet, which is a contradiction.}

\end{proof}

In view of Theorem \ref{impossible}, it is natural to ask if rendezvous with unbounded adversarial faults can be accomplished
in the class of connected graphs not containing cycles, i.e., in the class of trees, and if so, at what cost it can be done.
The rest of this section is devoted to a partial answer to this problem. We start with an auxiliary result about rendezvous
in oriented rings, under the {\em additional} assumption that the size of the ring is known to the agents. This result will be used
in a special situation when the size of the ring can be {\em learned} by the agents at small cost. By an oriented ring we mean a ring
in which every edge has ports 0 and 1 at its extremities. This provides the agents with the orientation of the ring: they can both
go in the same direction by choosing port 0 at each node.

\begin{lemma}\label{oriented}
If the size $n$ of an oriented ring is known to the agents, then rendezvous with unbounded adversarial faults can be achieved at cost
$O(n\ell)$, where $\ell$ is the smaller label.
\end{lemma}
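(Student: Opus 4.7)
The plan is to exhibit a very simple algorithm. Each agent with label $L$, from the round it wakes up, keeps trying to traverse the edge leaving by port $0$ in every round, and stops as soon as it has completed exactly $Ln$ successful edge traversals. Since the ring is oriented and $Ln$ is a multiple of $n$, the agent is back at its starting node when it halts; thereafter it stays idle forever (unless it detects rendezvous at a node, in which case both agents stop).

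I would fix notation: let the two agents $A_1, A_2$ have labels $L_1<L_2$ (so $\ell=L_1$), starting at nodes $p_1,p_2$ with clockwise distance $d\in\{1,\dots,n-1\}$, and let $m_i(t)$ be the number of successful edge traversals of $A_i$ by the end of round $t$. Since the ring is oriented, $A_i$'s position is $p_i+m_i(t)\bmod n$, so the agents are at the same node precisely when $m_1(t)-m_2(t)\equiv d\pmod n$.

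The key structural claim is that as long as both agents are still running and have not yet met, the adversary is forced to maintain $|m_1(t)-m_2(t)|<n$. The reason is that $m_1-m_2$ starts at $0$ when both agents are active, changes by at most $\pm 1$ in each round (depending on which of the two agents, if any, succeeds in moving), and must avoid the residue $d\pmod n$; any integer trajectory avoiding $d\pmod n$ and starting at $0$ is confined to the interval $(-(n-d),d)$. A minor subtlety arises when the two wake-up times differ: during the interval in which only one agent is active, that agent's number of successful moves is also forced to stay strictly below $d$ (respectively $n-d$) to avoid walking onto the dormant partner, so the invariant still holds from the moment the second agent wakes up. I would also invoke here that every consecutive delay block is finite, so an agent that keeps attacking must eventually succeed.

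From this invariant the analysis is immediate. Let $T_1$ be the round in which $A_1$ completes its $\ell n$-th move; then $m_2(T_1)<(\ell+1)n$, and $A_1$ is now stationary at $p_1$. Since $A_2$ keeps attempting to move and the adversary can no longer keep the difference away from $d\pmod n$ (with $A_1$ frozen, the difference is forced to shift by exactly $-1$ every time $A_2$ eventually moves), $m_2$ grows monotonically to the first value $\ge m_2(T_1)$ congruent to $-d\pmod n$, which is $(\ell+1)n-d$. Because $L_2\ge\ell+1$, we have $(\ell+1)n-d<L_2 n$, so $A_2$ has not yet reached its own halting threshold and the two agents do meet at $p_1$. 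The total cost is therefore at most $\ell n+(\ell+1)n-d=O(n\ell)$. The one delicate point, and the one I expect to need most care when writing out, is verifying that the interval invariant on $m_1(t)-m_2(t)$ is genuinely preserved under every legal adversarial schedule, including both the asymmetric wake-up phase and the per-round book-keeping of dormant versus halted agents.
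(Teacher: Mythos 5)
Your proposal is correct and follows essentially the same route as the paper: march by port $0$ for a label-proportional multiple of $n$ successful traversals and then halt, so that the agent with the larger label outlasts the parked smaller one and is forced onto its node. Your interval invariant on $m_1(t)-m_2(t)$ is precisely the (unstated) fact behind the paper's one-line claim that the larger-labelled agent still has a full tour left when the smaller one stops; the paper prescribes $2n\lambda$ traversals so that this slack is immediate, whereas your $\lambda n$ variant needs your sharper ``first value congruent to $-d$'' computation, but both yield cost $O(n\ell)$.
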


\begin{proof}
The rendezvous algorithm for an agent with label $\lambda$ is the following: start by port 0, perform $2n\lambda$ edge traversals in the same direction, and stop. If two agents execute this algorithm, then, if they have not met before, the agent with larger label does at least
one full tour of the ring after the agent with smaller label already stopped. Hence they must meet at cost at most $4(\ell+1)n$,
where $\ell$ is the smaller label. 

\end{proof}

Our goal is to present an efficient rendezvous algorithm working for arbitrary trees. We will use the following notion.
Consider any tree $T$.
A {\em basic walk} in $T$, starting from node $v$ is a traversal of all edges of the tree ending at the starting node $v$ and defined as follows.
Node $v$ is left by port 0; when the walk enters a node by port $i$, it leaves it by port $(i+1)$ mod $d$, where $d$ is the degree of the node.
Any basic walk consists of $2(n-1)$ edge traversals.
An agent completing the basic walk knows that this happened and learns the size $n$ of the tree and the length $2(n-1)$ of the basic walk.

\remove
{
Consider the following sequence of trees constructed recursively from $T$: $T_0=T$, and $T_{i+1}$ is the tree obtained from
$T_i$ by removing all its leaves. $T'=T_j$  for the smallest $j$ for which $T_j$ has at most two nodes. If $T'$ has one node, then
this node is called the {\em central node} of $T$. If $T'$ has two nodes, then the edge joining them is called the {\em central edge} of $T$.

After performing a basic walk starting from any node of a tree $T$ the agent has a complete map of the tree with all port numbers marked.
In particular, it knows whether the tree has a central node or a central edge. Define the {\em canonical walk} to be the basic walk starting at the central node in the first case
and, in the second case, as the walk starting at any extremity of the central edge, choosing the port corresponding to this edge and then following the rule that
when the walk enters a node by port $i$, it leaves it by port $(i+1)$ mod $d$, where $d$ is the degree of the node.
}

The following Algorithm {\tt Tree-RV-UF} (for rendezvous in trees with unbounded faults) works for an agent with label $\lambda$, starting at an arbitrary node of any tree~ $T$.

\smallskip

\noindent
{\bf Algorithm Tree-RV-UF}

\noindent
Repeat $2\lambda$ basic walks starting from the initial position and stop.
\hfill $\diamond$

\begin{theorem}\label{tree}
Algorithm {\tt Tree-RV-UF} is a correct rendezvous algorithm with unbounded adversarial faults in arbitrary trees, and works at cost $O(n\ell)$, where
$n$ is the size of the tree and $\ell$ is the smaller label.
\end{theorem}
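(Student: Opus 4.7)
The plan is to establish both correctness and the cost bound $O(n\ell)$. Let $\lambda_1=\ell$ and $\lambda_2=L$ with $\ell<L$ denote the two labels, and let $v_\ell, v_L$ be their respective starting nodes. Each basic walk traverses $2(n-1)$ edges and, crucially, visits every node of the tree. This property is the workhorse of the argument: whenever one agent is stationary at a node $w$ throughout a complete basic walk performed by the other, the walking agent must pass through $w$, so rendezvous occurs there.

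For correctness, let $T$ be the round at which the smaller agent completes its $2\ell$-th basic walk; from $T$ onwards the smaller remains permanently at $v_\ell$. Suppose rendezvous has not occurred by round $T$. I would argue that the larger must still be active at $T$: otherwise the larger would have been stationary at $v_L$ from some earlier round $T'<T$, and during the smaller's continued execution the smaller must visit $v_L$ at some time in $[T',T]$ (one can verify this by considering the walk of the smaller in progress at $T'$ and, if needed, the subsequent walks it completes before $T$), yielding rendezvous before $T$---a contradiction. Consequently the larger is still executing at round $T$, and since each of its remaining walks visits $v_\ell$, rendezvous occurs within at most $2(n-1)$ further edge traversals of the larger. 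The cases in which one agent is dormant while the other is active reduce immediately to the workhorse observation applied to the active agent's first basic walk.

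For the cost bound, the smaller contributes at most $2\ell\cdot 2(n-1)=O(n\ell)$ edge traversals; it remains to bound the larger's contribution by $O(n\ell)$, i.e., to show that it completes at most $O(\ell)$ basic walks before rendezvous. The intuition is that every walk of the larger visits $v_\ell$, while the smaller is at $v_\ell$ at the endpoints of each of its $2\ell$ walks (and throughout its dormant and terminated periods); a scheduling argument should rule out the larger's executing substantially more than $2\ell$ walks without some $v_\ell$-visit coinciding with a $v_\ell$-occupancy of the smaller, which would yield earlier rendezvous.

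The main obstacle is making this scheduling argument rigorous. The adversary has substantial flexibility, being able to impose arbitrary finite delays on each agent independently and thereby desynchronize the two walk schedules. A clean proof must rule out strategies that delicately time delays so as to keep the larger's $v_\ell$-visits disjoint from the smaller's $v_\ell$-occupancies throughout a long prefix of the execution. A natural line of attack is a case analysis: delaying one agent at a non-home node for longer than one basic walk permits the other's walks to visit that node and force a meeting there; delaying the smaller at $v_\ell$ itself lets the larger's very next visit to $v_\ell$ trigger rendezvous. Combining these observations bounds the larger's walk count by $O(\ell)$, yielding the total cost bound of $O(n\ell)$.
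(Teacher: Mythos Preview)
Your argument has a genuine gap in the cost bound, and a related subtle hole in the correctness direction. The property you lean on --- that a basic walk visits every node --- is too weak by itself. The paper's proof rests on a much stronger structural fact you never invoke: the two basic walks (from $v_\ell$ and from $v_L$) traverse the \emph{same} cyclic sequence of directed edges, merely with different starting points. In other words, repeating the basic walk in $T$ is exactly the same as walking in a fixed direction around a virtual oriented ring of length $2(n-1)$. This reduces the problem to Lemma~\ref{oriented}. Concretely, if $x_L(t)$ and $x_\ell(t)$ denote the number of traversals completed by each agent by round $t$, then the agents occupy the same virtual ring node precisely when $x_L(t)-x_\ell(t)\equiv d\pmod{2(n-1)}$ for the fixed initial offset $d\in\{1,\dots,2(n-1)-1\}$. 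Since this difference starts at $0$, ends at $2(L-\ell)\cdot 2(n-1)\ge 4(n-1)$, and changes by at most $1$ per round, it must hit $d$; and at that first hitting time the total cost is at most $4\ell(n-1)+d=O(n\ell)$. Both correctness and cost fall out at once.

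Your sketch, by contrast, tries to force a meeting at the specific nodes $v_\ell$ or $v_L$, and the proposed case analysis does not cover the adversary. For instance, ``delaying one agent at a non-home node for longer than one basic walk permits the other's walks to visit that node'' is stated in rounds, but during those rounds the adversary may also stall the other agent, so no full walk is completed and no visit occurs. Short, interleaved delay bursts at many nodes avoid all of your cases. Your correctness argument has the same flavor of gap: if the larger finishes at round $T'<T$ while the smaller is already inside its final (i.e., $2\ell$-th) walk and has passed $v_L$ earlier in that walk, the smaller need not revisit $v_L$ in $[T',T]$; you would need the virtual-ring difference argument (or something equivalent) to rule this configuration out. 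The missing idea, in one line, is the reduction to the oriented ring via the Euler-tour structure of the basic walk.
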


\begin{proof}
Fix any nodes $v_1$ and $v_2$ of $T$, which are the starting positions of the agents.  Notice that agents repetitively performing a basic walk starting at $v_1$ and $v_2$ respectively, traverse all edges
of the tree in the same order and in the same direction, with a cyclic shift. Hence performing a repetitive basic walk in a tree of size $n$ can be considered as making tours of a ``virtual'' oriented ring of length $2(n-1)$ composed of edges in the order and direction imposed by the basic walk. Each edge of the tree is traversed exactly twice in each tour of this
virtual ring (once in each direction) and the direction of the walk in this virtual ring is the same, regardless of the node where the basic walk starts. Since, after completing the first basic walk, the agent learns
its length, and 
Algorithm {\tt Tree-RV-UF} is equivalent to the algorithm from the proof of Lemma \ref{oriented}  providing rendezvous in oriented rings of known size
(run on the virtual oriented ring given by the basic walk of the tree), the conclusion follows.

\end{proof}

We do not know if Algorithm {\tt Tree-RV-UF} has optimal cost, i.e., if a lower bound $\Omega(n\ell)$ can be proved on the cost of any
rendezvous algorithm with unbounded adversarial faults, working in arbitrary trees of size $n$. However, we prove a weaker lower bound.
It is clear that no algorithm can beat cost $\Theta(n)$ for rendezvous in $n$-node trees, even without faults. Our next result shows that,
for unbounded adversarial faults, $\Omega(\ell)$ is a lower bound on the cost of any rendezvous algorithm, even for the simplest tree,
that of two nodes.  

\begin{proposition}\label{lb}
Let $T$ be the two-node tree.
Every rendezvous algorithm with unbounded adversarial faults, working for the tree $T$, has cost $\Omega(\ell)$, where
$\ell$ is the smaller label. 
\end{proposition}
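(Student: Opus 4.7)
The plan is to invoke Lemma~\ref{lem-adversaire} for the singleton family $\cF = \{T\}$, noting that $T$ is a regular graph of degree $1$ with homogeneous port labeling (the unique edge carries port $0$ at both endpoints). Fix a parameter $m$ and invoke Lemma~\ref{lem-adversaire} with $M = 2m$ to obtain a behavior of the adversary and an infinite increasing sequence $\tau_1 < \tau_2 < \cdots$ such that, for each label $\lambda \leq 2m$, the solo execution of $\cA$ by $A_\lambda$ guaranteed by the lemma performs its moves exactly in rounds $\tau_1, \dots, \tau_{k(\lambda)}$, for some $k(\lambda) \in \mathbb{N} \cup \{\infty\}$.

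The first claim is that $k$ must be injective on $\{1, \dots, 2m\}$. Suppose, for contradiction, that $k(\lambda_1) = k(\lambda_2) = k$ for two distinct labels $\lambda_1, \lambda_2$. Place these agents at the two endpoints of $T$ and let the two-agent adversary act on each agent according to the solo strategy from Lemma~\ref{lem-adversaire}. This composite strategy is a valid two-agent adversary because in $T$ each agent's perception is fully captured by its own attempts and their outcomes---in a degree-$1$ graph with homogeneous port labeling no additional information can be gleaned---so before any meeting each agent's history coincides with the corresponding solo history and the per-agent adversarial decisions are well-defined. Consequently, both agents perform successful moves exactly in rounds $\tau_1, \dots, \tau_k$ (or in every $\tau_i$ if $k = \infty$); since each such round is a simultaneous swap, the two agents remain at opposite endpoints of $T$ forever and never meet, contradicting the correctness of $\cA$.

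Injectivity of $k$ implies that at most $m$ labels in $\{1, \dots, 2m\}$ have $k(\lambda) < m$, so at least $m$ labels have $k(\lambda) \geq m$; by pigeonhole the smallest such, call it $\lambda_1$, satisfies $\lambda_1 \leq m+1$. Pick any other label $\lambda_2 \leq 2m$ with $k(\lambda_2) \geq m$, and run $\cA$ on the pair $(\lambda_1, \lambda_2)$ from opposite endpoints against the same two-agent adversary. As argued above, both agents successfully move simultaneously in each of the first $m$ rounds of the $\tau$-sequence, each such round contributing two edge traversals, so the total cost is at least $2m \geq 2(\lambda_1 - 1) = \Omega(\lambda_1)$. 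Since $\lambda_1$ is the smaller label of the pair and tends to infinity as $m$ grows (only finitely many labels can have $k(\lambda) < m$, by injectivity of $k$), this yields pairs with arbitrarily large smaller label at cost $\Omega(\ell)$, establishing the lower bound.

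The main obstacle is to justify that the adversary from Lemma~\ref{lem-adversaire}, which was designed for solo executions, can be lifted to the two-agent setting as a joint strategy acting independently on each agent, and that each agent's two-agent history before rendezvous matches its solo history under that strategy. This relies on the simple structural fact that in the two-node tree the only observations available to an agent are its own attempted moves and their outcomes, so the two executions produce identical per-agent observations until an actual meeting takes place.
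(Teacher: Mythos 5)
Your proof is correct and follows essentially the same route as the paper's: both invoke Lemma~\ref{lem-adversaire} on the two-node tree, observe that two agents making their moves in exactly the same rounds $\tau_1,\tau_2,\dots$ would perpetually swap endpoints and never meet, deduce that the numbers of attacks must therefore be pairwise distinct (with at most one agent attacking infinitely often), and conclude that some pair of labels bounded by the chosen parameter must both move many times before rendezvous, forcing cost linear in the smaller label. The only differences are cosmetic: your single injectivity claim merges the paper's two separate distinctness observations, and you work with $2m$ labels and the smallest ``high-count'' label $\lambda_1\le m+1$ where the paper works with $M$ labels and the two largest attack counts.
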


\begin{proof}
Let $M$ be a positive integer, and $\cA$ a rendezvous algorithm working in the tree $T$. For every $j\in \{1,\dots,M\}$, 
consider the solo execution $\cE_j$ of $\cA$, for agent $A_j$ with label $j$, and
for the adversary constructed in Lemma \ref{lem-adversaire}.
Let $\cE(i,j)$ be the execution of $\cA$ in $T$, for agents $A_i$ and $A_j$ starting simultaneously from the two nodes of $T$,
for the same adversary.
By  Lemma \ref{lem-adversaire}, for at most one value of $j\in \{1,\dots,M\}$, agent $A_j$ attacks infinitely many times
in execution $\cE_j$, because if there were two such agents $A_i$ and $A_j$, then they would not meet in execution  $\cE(i,j)$.
Hence, for at least $M-1$ values of $j\in \{1,\dots,M\}$, agent $A_j$ attacks a finite number of times in execution $\cE_j$.
Again by Lemma \ref{lem-adversaire}, for all these values of $j$, agent $A_j$ must attack a different number of times in execution
$\cE_j$, because if there were two agents $A_i$ and $A_j$ attacking the same number $k$ of times, then they would not meet in execution  $\cE(i,j)$. Hence, for at least two different labels $\ell <L \leq M$ this number of attacks must be at least $M-2$.
It follows from Lemma \ref{lem-adversaire} that in execution $\cE(\ell,L)$ each of the agents makes at least $M-2$ moves before
rendezvous. Since $\ell \leq M-1$, 
agent $A_{\ell}$ makes at least $\ell-1$ moves before rendezvous, which completes the proof.

\end{proof}

\section{Bounded adversarial faults}

In this section we consider the scenario when the adversary can delay each of the agents for at most $c$ consecutive rounds, where $c$ is a positive integer, called the {\em fault bound}.
First note that if $c$ is known to the agents, then, given any synchronous rendezvous algorithm working without faults for arbitrary networks, 
it is possible to obtain an algorithm working for bounded adversarial faults and for arbitrary networks, at the same cost.
Let $\cA$ be a synchronous rendezvous algorithm for the scenario without faults, working for arbitrary networks. Consider the following algorithm $\cA(c)$ working for
bounded adversarial faults with parameter $c$. Each agent replaces each round $r$ of algorithm  $\cA$ by a segment of $2c+1$ rounds.
If in round $r$ of algorithm $\cA$  the agent was idle, this round is replaced by $2c+1$ consecutive rounds in which the agent is idle. If in round $r$ the agent left
the current node  by port $p$, this round is replaced by a segment of $2c+1$ rounds in each of which the agent makes an attempt 
to leave the current node $v$
by port $p$ until it succeeds, and in the remaining rounds of the segment it stays idle at the node adjacent to $v$ that it has just entered. 

We associate the first segment of the later starting agent with the (unique) segment of the earlier agent that it intersects in at least $c+1$ rounds. Let it be the $i$th segment of the earlier agent.
We then associate the $j$th segment of the later agent with the $(j+i-1)$th segment of the earlier agent, for $j>1$. Hence, 
regardless of the delay between starting rounds of the agents, corresponding segments intersect in at least $c+1$ rounds. If the agents met at node $x$ in the $j$th round of the 
later agent, according to algorithm $\cA$, then, according to algorithm $\cA(c)$,  in the last $c+1$ rounds of its $j$th segment the later agent is at $x$ and
in the last $c+1$ rounds of its $(j+i-1)$th segment the earlier agent is at $x$. Since these segments intersect in at least $c+1$ rounds, there is a round in which both agents are at node $x$
according to algorithm $\cA(c)$, regardless of the actions of the adversary, permitted by the bounded adversarial fault scenario. This shows that algorithm $\cA(c)$ is correct.
Notice that the cost of algorithm $\cA(c)$ is the same as that of algorithm $\cA$, because in each segment corresponding to an idle round of algorithm $\cA$, an agent stays idle
in algorithm $\cA(c)$ and in each segment corresponding to a round in which an agent traverses an edge in algorithm $\cA$, the agent makes exactly one traversal in algorithm $\cA(c)$.

In the rest of this section we concentrate on the more difficult situation when the fault bound $c$ is unknown to the agents. The following 
Algorithm {\tt Graph-RV-BF} (for rendezvous in graphs with bounded faults) works for an agent with label $\lambda$ starting at an arbitrary node of any graph.

Algorithm {\tt Graph-RV-BF} is divided into phases. The $i$-th phase is composed of $2^i$ stages, each lasting
$s_i=2^{i+4}$ rounds. Hence the $i$-th phase lasts $p_i=2^{2i+4}$ rounds. The $\lambda$-th stage of the $i$-th phase consists of two parts: the \emph{busy} part of $b_i=3\cdot 2^{i}$ rounds and the \emph{waiting} part of  $w_i=13\cdot 2^{i}$ rounds. During the busy part of the $\lambda$-th stage of phase $i$, the agent tries to explore the graph three times ({each exploration attempt} lasts at most $e_i=2^{i}$ rounds), using a UXS. We say that the agent is {\em active}
during the busy part of the $\lambda$-th stage of each phase $i\geq q=\lceil\log(\lambda+1)\rceil$.  In order to explore the graph, the agent keeps estimates of the values of $c$ and $P(n)$. (Recall that the latter is the length of a UXS that allows to traverse all edges of any graph of size at most $n$, starting from any node). The values of these estimates in phase $i$ are  called $c_i$ and $u_i$, respectively, and grow depending on the strategy of the adversary.  For the first phase $q$ in which the agent is active, we set $u_q=1$ and $c_q=2^{q}$. In phase $i$ the agent uses the UXS of length $u_i$.
Call this sequence $S$. The agent uses this UXS proceeding by steps. Steps correspond to terms of the sequence $S$. 
During phase $i$, the $k$-th step consists of $s_i$ rounds during which the agent tries to move, using port $(p+S[k] \mod {d})$ (where $d$ is the degree of the current node and $p$ is the port by which the agent entered the current node), until it succeeds or until the $s_i$ rounds of the $k$-th step are over. If it succeeded to move, it waits until the $s_i$ rounds of the step are over. If the agent succeeds to perform all of its three UXS explorations during a phase $i$, i.e., if it succeeds to move once in each step, then we set $u_{i+1}=2u_i$ and $c_{i+1}=c_i$. Otherwise, we set $u_{i+1}=u_i$ and $c_{i+1}=2c_i$. {When the agent is not active, it waits at its current node}. The agent executes this algorithm until it meets the other agent.

Below we give the pseudocode of the algorithm that works for an agent with label $\lambda$
starting at an arbitrary node of any connected graph. 

\smallskip

\noindent
{\bf Algorithm Graph-RV-BF}

\smallskip

\noindent
$q:=\lceil\log(\lambda+1)\rceil$;\\
$c_q:=2^q$;\\
$u_q:=1$;\\
$i:=0$;\\
{\bf while} rendezvous not achieved {\bf do}\\
\hspace*{1cm}{\bf for}	 $j:=0$ {\bf to} $2^i-1$ {\bf  do}\\
\hspace*{2cm}{\bf if} $\lambda=j$ {\bf then}\\
\hspace*{3cm}{\em success} $:=$ {\tt true};\\
\hspace*{3cm}{\bf for} $r:=0$ {\bf to} 2 {\bf  do}\\
\hspace*{4cm}{\em success} $:=$ ({\em success} AND {\tt exploration}$(u_i,c_i)$);\\	
\hspace*{4cm}/*the value of {\tt exploration}$(u_i,c_i)$) may be different\\ 
\hspace*{4cm}in different iterations of the loop, due to the actions\\ 
\hspace*{4cm}of the adversary*/\\
{\hspace*{3cm}{\bf endfor}}\\		
\hspace*{3cm}{\bf if} {\em success} {\bf then}\\
\hspace*{4cm}$u_{i+1}:=2u_i$;  $c_{i+1}:=c_i$;\\
\hspace*{3cm}{\bf else}\\
\hspace*{4cm}$u_{i+1}:=u_i$;  $c_{i+1}:=2c_i$;\\
{\hspace*{3cm}{\bf endif}}\\		
\hspace*{3cm}{\tt wait} for $13\cdot 2^i$ rounds;\\
\hspace*{2cm}{\bf else}\\
\hspace*{3cm}{\tt wait} for $2^{i+4}$ rounds;\\
{\hspace*{2cm}{\bf endif}}\\
{\hspace*{1cm}{\bf endfor}}\\
\hspace*{1cm}$i:=i+1$;\\
{{\bf endwhile}}
\hfill $\diamond$

\smallskip

We use the following function for exploration.

\smallskip

\noindent
boolean {\tt exploration}(integer $u$, integer $c$)\\
{\em success} $:=$ {\tt true};\\
$S:=$ UXS of length $u$;\\	
{\bf for} $k:=0$ {\bf to} $u-1$ {\bf  do}\\
\hspace*{1cm}{\em moved} $:=$ {\tt false};\\
\hspace*{1cm}{\bf for} $r:=0$ {\bf to} $c-1$ {\bf  do}\\
\hspace*{2cm}$p:=$ port by which the agent entered the current node;\\
\hspace*{2cm}$d:=$ degree of the current node;\\
\hspace*{2cm}try to move by port $(p+S[k] \mod {d})$;\\
\hspace*{2cm}{\bf if} move is successful {\bf then}\\
\hspace*{3cm}{\em moved} $:=$ {\tt true}; break;\\
{\hspace*{2cm}{\bf endif}}\\
{\hspace*{1cm}{\bf endfor}}\\
\hspace*{1cm}{\tt wait} for $c-1-r$ rounds;\\
\hspace*{1cm}{\bf if} {\em moved} $=$ {\tt false} {\bf then}\\
\hspace*{2cm}{\em success} $:=$ {\tt false}; break;\\
{\hspace*{1cm}{\bf endif}}\\
{{\bf endfor}}\\
{\tt wait} until the call of {\tt exploration} lasts for $2^{cu}$ rounds;\\
return {\em success};
\hfill $\diamond$

\begin{theorem} \label{th:algocost}  
Algorithm {\tt Graph-RV-BF} is a correct rendezvous algorithm with bounded adversarial faults in arbitrary graphs, and works at cost polynomial in the size $n$ of the graph, and logarithmic
in the fault bound $c$ and  in the larger label $L$.
\end{theorem}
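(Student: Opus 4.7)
The plan is to decompose the proof into three parts: first, verify that an exploration call with sufficient estimates always succeeds and traverses the whole graph; second, bound the number of phases needed for both agents to reach sufficient estimates; third, show that once both agents have good estimates, rendezvous is forced in a further bounded number of phases, and then add up the cost.

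For the first part, I would show that whenever an agent invokes \texttt{exploration}$(u,c')$ with $c' \geq c$ and $u \geq P(n)$, the call returns \emph{success} and the agent traverses every edge of the graph. This is immediate: the inner loop of \texttt{exploration} tries to move in each of $c'$ consecutive rounds, and since the adversary can delay for at most $c \leq c'$ rounds, at least one attempt succeeds per inner loop, so the outer loop completes all $u$ steps of the UXS; since $u \geq P(n)$, every edge is traversed. Using this, the second part is a simple accounting of the main loop: at the end of each phase $i$ we have either $u_{i+1}=2u_i$ (when all three exploration calls succeeded) or $c_{i+1}=2c_i$ (otherwise). Hence after at most $\lceil \log(c/c_q)\rceil$ failures and $\lceil \log P(n)\rceil$ successes, i.e., by phase $I_\lambda = q + O(\log n + \log c)$, the agent has sufficient estimates; and from that phase onward every exploration call succeeds, so $u_i$ keeps doubling while $c_i$ is frozen.

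The main step is to show that rendezvous occurs in some phase $i$ with $i \geq I^\star = \max(I_{\lambda_A}, I_{\lambda_B}) + O(1)$. I would pick $i$ large enough that the wake-up delay $|t_A - t_B|$ is a small fraction of the phase length $p_i = 2^{2i+4}$. During the busy part of $A$'s active stage in its phase $i$, namely a window of $3 \cdot 2^i$ consecutive rounds, agent $A$ performs three independent exploration calls of at most $2^i$ rounds each. I would then analyze where $B$ sits during this window: $B$ is in one, or at most two, of its own phases, and in each such phase $j$, $B$'s busy part has length $3 \cdot 2^j$. Using (i) the fact that $A$ and $B$ have distinct labels and therefore distinct active stages within any phase of a given index, and (ii) the fact that a busy part is a negligible fraction of a phase, I would argue by a pigeonhole argument that at least one of $A$'s three exploration calls lies entirely in a time interval during which $B$ is idle at some node. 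Since this call is a successful exploration, $A$ traverses every edge, visits the node where $B$ sits, and rendezvous occurs.

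Finally, for the cost bound, each active stage of phase $i$ produces at most $3u_i$ edge traversals, $u_i$ doubles at most $\lceil\log P(n)\rceil$ times, and $u_i$ never exceeds $O(P(n))$ by the time the rendezvous phase is reached; summing over all executed phases gives a total of $\mathrm{poly}(n)$ edge traversals per agent, while the number of executed phases is $q + O(\log n + \log c) = O(\log L + \log n + \log c)$, yielding a cost polynomial in $n$ and logarithmic in $c$ and $L$. The main obstacle will be the pigeonhole step under an arbitrary wake-up delay: one has to choose $i$ large enough that $B$ occupies a controlled number of its own phases inside $A$'s busy window of length $3 \cdot 2^i$, and that the total time $B$ spends in a busy part inside this window is strictly less than $2 \cdot 2^i$, so that two of $A$'s three exploration calls remain untouched by $B$'s motion.
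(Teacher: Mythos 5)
Your first two parts (a call to \texttt{exploration}$(u,c')$ with $u\geq P(n)$ and $c'\geq c+1$ covers the whole graph, and each phase doubles one of the two estimates, so both become sufficient within $O(\log L+\log n+\log c)$ phases) are correct and agree with the paper. The gap is in your main step. Your pigeonhole requires that, inside the busy window of $A$'s active stage of phase $i$ (length $3\cdot 2^i$), the rounds in which $B$ is executing a busy part have total length less than $2\cdot 2^i$, so that one of $A$'s three exploration calls (each of length $2^i$) is untouched. This fails precisely in the case that matters: during that window $B$ can be in \emph{its own} phase $i$ (or $i-1$), where its busy part is a single contiguous block of length $3\cdot 2^i$ (resp.\ $3\cdot 2^{i-1}$), and for a suitable wake-up delay $\delta$ this block can cover, or at least touch, all three of $A$'s exploration calls. ``Distinct labels give distinct active-stage indices'' does not help, because $\delta$ can realign the $\lambda_A$-th stage of $A$'s phase $i$ with the $\lambda_B$-th stage of $B$'s phase $i$ in absolute time; and ``choose $i$ large'' does not help either, since for essentially every $i$ there is a $\delta$ producing (near-)total overlap in phase $i$. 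What is actually needed --- and what the paper proves as its Claim 1 --- is a drift argument: the offset between the start times of the two active stages changes by $(\lambda_A-\lambda_B)(s_{i+1}-s_i)$ from one phase to the next, which is at least $s_{i+1}-s_i=2^{i+4}$ in absolute value and hence much larger than the busy-part length $3\cdot2^i$; consequently the busy parts of the two agents can intersect in at most three phases over the whole execution, and in every other phase the later agent is idle during the \emph{entire} busy window of the earlier one, so a single successful covering exploration already forces rendezvous. This is the technical heart of the proof and it is missing from your plan.

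A second, smaller omission concerns the cost bound, which needs $u_i=O(P(n))$ for \emph{both} agents up to the rendezvous phase. For the earlier agent this follows from the above (in a non-intersecting phase with $u_i\geq P(n)$, either some move is blocked and $u_i$ is frozen, or the exploration covers the graph and rendezvous occurs; intersecting phases number at most three). For the later agent the same argument breaks, because its busy part can be contained in a much longer busy part of the earlier agent, who is then \emph{not} idle throughout; the paper handles this with a separate computation (its Claim~2, Subcase~(2.1)) showing that if $u'_k$ were as large as $48P(n)$ then the later agent would complete three full explorations within a single $c_j$-round step of the earlier agent, who moves at most twice in that span, again forcing rendezvous. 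This is in fact why the algorithm explores three times per stage --- a feature your proposal uses for a different (and, as argued above, insufficient) purpose.
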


\begin{proof}
Let $a$ with label $\lambda$ be the first agent to be activated by the adversary. The second agent $a'$ with label $\lambda'$ is activated after $\delta\geq 0$ rounds. Let $S_i$ (respectively $S_i'$) be the $\lambda$-th (respectively $\lambda'$-th) stage of phase $i$ of execution of agent $a$ (respectively agent $a'$). The stage $S_i$ (respectively $S_i'$) is called the \emph{active} stage of agent $a$ (respectively $a'$) for phase $i$. Recall that agent $a$ (respectively agent $a'$) is \emph{active} in round $t$ if it is executing a busy part of an active stage $S_i$ (respectively $S_i'$). Let $t_k$ (respectively $t_k'$) be the round in which stage $S_k$ (respectively $S_k'$) starts. We say that stage $S_i$ of agent $a$ intersects stage $S_j'$ of agent $a'$ if there is a round during the execution when both agents are active, with agent $a$ executing {the busy part of} stage $S_i$ and agent $a'$ executing {the busy part of} $S_j'$. We say that an active stage of an agent is \emph{useless} if it intersects an active stage of the other agent. We denote by $u_i'$ and $c_i'$ the values of the variable $u_i$ and $c_i$ during the execution of Algorithm {\tt Graph-RV-BF} by agent $a'$. For simplicity, we denote these values for agent $a$ by $u_i$ and $c_i$. First, we show the following claim.

\noindent
{\bf Claim 1.}
At most three active stages of agent $a$ are useless.

In order to prove the claim,
we can assume that agent $a$ has at least one useless stage, otherwise the claim is proved. Let $S_i$ be the first useless stage of $a$. We consider the minimal index $j>i+1$ such that the active stage $S_j$ is useless. If $j$ does not exist then the claim is proved, since only stages $S_i$ and $S_{i+1}$ can be useless. We will show that for $r>j$, stage $S_r$ is not useless. This will prove the claim since in this case only the stages $S_i$, $S_{i+1}$ and $S_{j}$ can be useless. Observe that $\delta<\sum_{k=0}^{i}p_k$, since the second agent was active during the phase $i$ of the first agent. For all $k\in\mathbb{N}$, we have $p_{k+1}=4p_{k}$. Hence, we have $\sum_{k=0}^{i}p_k< \frac{1}{2}p_{i+1}$. Thus {$\delta< \frac{1}{2}p_{k-1}$} for any $k> i+1$. Hence, for any $k>i+1$ the phase $k$ of agent $a$ happens {between the beginning of} the second half of phase $k-1$ and {the end of} phase $k$ of agent $a'$. Observe that agent $a'$ was active during some phase $i'\leq i$ since agent $a$ has a useless stage in phase $i$. Hence $\lambda'\leq 2^i\leq \frac{1}{2} 2^{k-1}$ for any $k\geq i+1$. 
{It implies that, for all $k\geq i+1$, stage $S_{k-1}'$ happens during the first half of phase $k-1$ of agent $a'$, and so before stage $S_k$. For any $k,r$ such that $k\geq i+1$ and $r<k$, stage $S_{r}'$ cannot intersect stage $S_k$.}
 Hence,  for any $k> i+1$, $S_k$ can only intersect stage $S_k'$. In particular, this implies that during phase $j$, the active stage $S_j$ intersects stage $S_j'$. Hence $\lambda'<\lambda$. We have $t_j'< t_j+b_j$ since $S_j'$ intersects $S_j$ {(recall that $S_j'$ intersects $S_j$ if there exists a round in which agents $a$ and $a'$ are both executing the busy part of  $S_j$ and $S'_j$, respectively)}. Let $k\geq j$. We have {$t_{k+1}=t_k+(2^k-\lambda)s_k+\lambda s_{k+1}=t_k+p_k+\lambda(s_{k+1}-s_k)$}. Similarly, we have $t_{k+1}'=t_k'+p_k+\lambda'(s_{k+1}-s_k)$ for any $k\geq j$. We obtain that for any $k\geq j$:
\begin{eqnarray*}
 t_{k+1}-t_{k+1}'&=&t_k+p_k+\lambda(s_{k+1}-s_k)-\left(t_k'+p_k+\lambda'(s_{k+1}-s_k)\right)\\
 &=&t_k-t_k'+(\lambda-\lambda')(s_{k+1}-s_k)\\
 &\geq & t_k-t_k'+(s_{k+1}-s_k),\quad\quad\mbox{since $\lambda>\lambda'$.}
\end{eqnarray*}

By induction on $r$, we have for any $r\geq j$ that :
\begin{eqnarray*}
 t_{r+1}-t_{r+1}'&\geq &t_j-t_j'+\sum_{k=j}^{r}(s_{k+1}-s_k)\\
&\geq& t_j-t_j'+s_{r+1}-s_j\\
&\geq& t_j-t_j'+\frac{1}{2}s_{r+1}\quad\mbox{(since $s_j\leq \frac{1}{2}s_{r+1}$)}\\
&\geq& \frac{1}{2}s_{r+1}-b_j\quad\mbox{(since $t_j'< t_j+b_j$)}\\
&\geq& 2b_{r+1}-b_j\quad\mbox{(since $s_{r+1}>4b_{r+1}$)}\\
&>& b_{r+1}\quad\mbox{(since $b_{r+1}> b_{j}$).}\\
\end{eqnarray*}
For any $r\geq j$, we conclude that the {busy part of} stage $S_{r+1}'$ of agent $a'$ starts in round $t_{r+1}'$ and ends in round $t_{r+1}'+b_{r+1}$. Stage $S_{r+1}'$ does not intersect $S_{r+1}$ since $t_{r+1}>t_{r+1}'+b_{r+1}$. Hence, for any $r>j$, stage $S_r$ is not useless. This ends the proof of the claim.

In order to bound the number of moves of the two agents, we will show the following claim.

\noindent
{\bf Claim 2.}
There exists a constant $d$, such that, for any integer $i$, either rendezvous occurs by the end of the execution of phase $i$ by one of the agents, or the values of $u_i$ and $u_i'$ are at most $dP(n)$. 

In order to prove the claim, 
first consider the values of $u_i$ (for agent $a$). Let $j=\min\{k\in\mathbb{N}\mid u_k\geq P(n)\}$. 
For $i<j$ we have $u_i \leq u_j\leq 2P(n)$.
Let $i\geq j$. Consider two cases: either (1) $S_i$ is not useless or (2) $S_i$ is useless. For Case (1), consider two subcases: either (1.1) the agent succeeds to move in each step of $S_i$, or (1.2) the agent is blocked by the adversary in some step of $S_i$. In case (1.1), agent $a$ explores all the graph since it performs a UXS of length $u_k$ for $u_k\geq P(n)$. Hence, rendezvous occurs since agent $a'$ does not move during stage $S_i$. For Case (1.2), we have $u_{i+1}=u_i$.  Since Case (2) can only happen three times by Claim 1, {in view of Case (1.2) and of the fact that $u_{z+1}\leq 2u_z$ for all $z\geq0$}, we have that for all $i$, $u_i\leq 8u_j \leq 16P(n)$. This proves the claim for agent $a$ with $d=16$.

Next, we consider the values of $u'_i$ (for agent $a'$). Let $x=\min\{k\in\mathbb{N}\mid u_k'\geq P(n)\}$. 
For $i<x$ we have $u'_i \leq u_x'\leq 2P(n)$. Let $i\geq x$. 
Consider two cases: either (1) $S_i'$ is not useless or (2) $S_i'$ is useless. Using the same argument as for agent $a$, we can assume that the value {$u'_{i+1}$} is equal to {$u'_i$} in Case~(1), {otherwise rendezvous occurs in phase $i$ of agent $a'$}. Consider Case~(2). We have that $S_i'$ intersects $S_j$ for some $j\geq i$. Let $p=\min\{k\in\mathbb{N}\mid \mbox{$S_k'$ intersects $S_j$}\}$ and $q=\max\{k\in\mathbb{N}\mid \mbox{$S_k'$ intersects $S_j$}\}$. We consider two subcases: (2.1) $q-p>0$ or (2.2) $q-p=0$. Consider Subcase~(2.1). Stages $S_{q-1}'$ and $S_q'$ intersect stage $S_j$ and we have $t_j\leq t_{q-1}'+b_{q-1}\leq t_q' \leq t_j+b_j$. This implies $t_{q}'-(t_{q-1}'+b_{q-1})\leq b_j$. {We have $t_{q}'-t_{q-1}'=p_{q-1}+\lambda'(s_{q+1}-s_q)\geq p_{q-1}$.} Hence:

\begin{eqnarray*}
p_{q-1} &\leq& b_{j}+b_{q-1}\\
2^{2(q-1)+4} &\leq& 3 \cdot 2^{j}+3\cdot 2^{q-1},\quad \mbox{since $p_i=2^{2i+4}$ and $b_i=3.2^{i}$ for all $i\in\mathbb{N}$}.\\
2^{2(q-1)+4} &\leq& 3 \cdot 2^{j}+3 \cdot 2^{j-1},\quad \mbox{since $S_q'$ intersects $S_j$ and $q\leq j$}.\\
2^{2q}&\leq & \frac{9}{8} \cdot 2^{j}\\
2q&\leq& j+\log\left(\frac{9}{8}\right)\\
2q&\leq& j+0.06\\
2q&\leq& j,\quad\mbox{since both $q$ and $j$ are integers.}\\
\end{eqnarray*}

For any $p\leq k\leq q$, we have $2k\leq j$. Now we show that for $k$ such that $p\leq k\leq q$, if $u_k'\geq 48P(n)$, then rendezvous occurs. Assume that $u_k' \geq 48P(n)$. We have :
{
\begin{eqnarray*}
2^{2k}&\leq &2^j \quad \mbox{since $2k\leq j$}\\
(e_k)^2 & \leq & e_j \quad \mbox{since $\forall i\in\mathbb{N}, e_i=2^i$}\\
e_k(c_k'u_k') & \leq & c_ju_j \quad \mbox{since $\forall i\in\mathbb{N}, e_i=u_i'c_i'=u_ic_i$}\\
48P(n)e_kc_k' & \leq & 16P(n) c_j \quad \mbox{since $u_k'\geq 48P(n)$ and $u_j\leq 16P(n)$}\\
3e_k & \leq & c_j \quad \mbox{since $c_k'\geq 1$.}\\
\end{eqnarray*}}

Hence, during a period of time of duration less or equal to $c_j$, agent $a'$ explores three times the graph. Agent $a$ moves at most twice during this period of time, since this period of time intersects at most two steps and agent $a$ moves at most once in each step. Hence, agent $a'$ explores the graph at least once while agent $a$ is not moving and thus rendezvous occurs. It follows that for $k$ such that $p\leq k\leq q$, if $u_k'\geq 48P(n)$, then rendezvous occurs. Hence, for $k$ such that $p\leq k\leq q$, 
if rendezvous does not occur by the end of phase $k$, we have $u_k'\leq 96P(n)$. (By definition of $p$ and $q$, in Subcase (2.1)
it is enough to restrict attention to $k$ in the interval $[p,q]$.) 

Finally, let us consider Subcase (2.2). The stage $S_i'$ is the only active stage of $a'$ to intersect $S_j$. This means that Subcase (2.2) can only occur three times by Claim 1. Hence the value of $u_i$ is less than $\max\{2^3 u_j,2^2 96P(n)\}\leq 384P(n)$. This proves the claim for agent $a'$ with $d=384$ and concludes the proof of Claim~2.

{We show that rendezvous occurs by the end of the execution of phase $\rho(r)=\log(P(n))+\log(c)+\log(L)+r$ by agent $a$, for some constant $r$. Note that for any phase $i\geq\lceil\log(L)\rceil \geq\lceil\log(\lambda)\rceil$, agent $a$ has an active stage $S_i$. At the end of each phase, either $u_{i+1}=2u_i$ or $c_{i+1}=2c_i$. For all $i$, if rendezvous has not occurred by the end of phase $i$, we have $u_i\leq dP(n)$ for some constant $d$ (cf. Claim 2). Moreover, observe that if $c_j\geq c+1$ then $u_{j+1}=2u_j$, since the adversary cannot prevent the move of the agent more than $c$ times. Hence, there exists a constant $y$, such that if rendezvous has not occurred by the beginning of phase $\rho(y)$, then we have $u_{\rho(y)}\geq P(n)$ and $c_{\rho(y)}\geq c+1$. This means that during any active stage $S_j$ with $j\geq \rho(y)$, agent $a$ explores the graph. At least one of the stages among $S_{\rho(y)}$, $S_{\rho(y)+1}$, $S_{\rho(y)+2}$ or $S_{\rho(y)+3}$ is not useless by Claim 1. Hence, rendezvous occurs by the end of the execution of phase 
$\rho(y)+3=\rho(y+3)=\rho(r)$ by agent $a$, with $r=y+3$.}

{Now we can conclude the proof of the theorem.
Since agent $a$ is activated before or at the same time as agent $a'$, agent $a'$ cannot execute more than the first $\rho(r)$ phases before rendezvous occurs. Hence, at most $\rho(r)$ phases are executed by each of the agents $a$ and $a'$. Since, in view of Claim 2, each agent makes at most $dP(n)$ moves during each phase, the total number of moves of the agents is in $O((\log(P(n))+\log(c)+\log(L))P(n))$. Since $P$ is a polynomial, the theorem follows.}

\end{proof}

Notice that in the bounded fault scenario (as opposed to the unbounded fault scenario) it makes sense to speak about the time of a rendezvous algorithm execution
(i.e., the number of rounds from the start of the earlier agent until rendezvous), apart from its cost.
Indeed, now the time can be controlled by the algorithm. Our last result  gives an estimate on the execution time of Algorithm {\tt Graph-RV-BF}.

\begin{theorem}
Algorithm {\tt Graph-RV-BF} works in time polynomial in the size $n$ of the graph,
in the fault bound $c$ and in the larger label $L$ .  
\end{theorem}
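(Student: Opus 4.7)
The plan is to piggyback on the analysis of Theorem~\ref{th:algocost}, where it is established that, writing $\rho(r)=\log(P(n))+\log(c)+\log(L)+r$, rendezvous is guaranteed by the end of phase $\rho(r)$ of the earlier agent $a$, for some absolute constant $r$. Since the time of Algorithm {\tt Graph-RV-BF} (counted from the start of $a$) is at most the sum of the durations of phases $0,1,\ldots,\rho(r)$, the whole task reduces to bounding this sum by a polynomial in $n$, $c$, and $L$.

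First I would recall that phase $i$ has duration $p_i=2^{2i+4}$, and observe that the geometric series gives
\[
\sum_{i=0}^{\rho(r)} p_i \;=\; \sum_{i=0}^{\rho(r)} 2^{2i+4} \;\leq\; 2\,p_{\rho(r)} \;=\; O\!\left(2^{2\rho(r)}\right).
\]
Next I would substitute the value of $\rho(r)$: since $2^{\rho(r)}=2^{r}\,P(n)\,c\,L$, we get $2^{2\rho(r)}=O(P(n)^{2}\,c^{2}\,L^{2})$. Because $P$ is a polynomial (the Reingold UXS length, cf.~Section~\ref{prelim}), $P(n)^2$ is polynomial in $n$, so the total is polynomial in $n$, $c$, and $L$, as claimed.

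The only subtlety to address is that the theorem speaks of the execution time from the start of the earlier agent, while Theorem~\ref{th:algocost} is stated in terms of completion of phases of agent $a$. This is immediate, however: by definition $a$ is the first agent to be activated, so the time elapsed between its wake-up and rendezvous is bounded by the number of rounds of $a$'s execution, which is at most $\sum_{i=0}^{\rho(r)}p_i$. There is no real obstacle in this argument; the bulk of the work (namely, controlling how quickly $u_i$ and $c_i$ grow against the adversary and pinning down the polylogarithmic bound $\rho(r)$ on the relevant phase number) has already been done in the proof of Theorem~\ref{th:algocost}, and all that remains here is to translate that phase bound into a round bound via the explicit geometric expression for $p_i$.
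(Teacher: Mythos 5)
Your proposal is correct and follows essentially the same route as the paper's own proof: both invoke the phase bound $\rho=\log(P(n))+\log(c)+\log(L)+O(1)$ established in the proof of Theorem~\ref{th:algocost}, sum the geometric series $\sum_i p_i \leq 2p_{\rho}$, and conclude a time bound of $O((P(n)\cdot c\cdot L)^2)$, which is polynomial since $P$ is a polynomial. No further comment is needed.
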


\begin{proof}
From the proof of Theorem \ref{th:algocost} {we know that rendezvous occurs by the time when the agent activated earlier executes its $\rho$-th phase with $\rho=\log(P(n))+\log(c)+\log(L)+O(1)$}. Thus the number of rounds until rendezvous (since the activation of the first agent) is $\sum_{i=0}^{\rho}p_i$. We have:
{
\begin{eqnarray*}
\sum_{i=0}^{\rho}p_i&\leq &2p_{\rho},\quad\mbox{since $\forall i\in \mathbb{N}, p_{i+1}=4p_i$}\\
&\leq &2^{2\rho+5},\quad\mbox{since $\forall i\in \mathbb{N}, p_{i+1}=2^{2i+4}$}\\
&=& O((P(n) \cdot c \cdot L)^2).\\
\end{eqnarray*}}

Since $P$ is a polynomial, Algorithm {\tt Graph-RV-BF} works in time polynomial in the size $n$ of the graph, in the fault bound $c$ and in the larger label $L$.

\end{proof}

Notice the difference between the estimates of cost and of time of Algorithm {\tt Graph-RV-BF}: while we showed that cost is polylogarithmic in $L$ and $c$, for time we were only able to show that it
is polynomial in $L$ and $c$.  Indeed, Algorithm {\tt Graph-RV-BF} relies on a technique similar to ``coding by silence'' in the time-slice algorithm for leader election \cite{Ly}:
``most of the time'' both agents stay idle, in order to guarantee that agents rarely move simultaneously.
It remains open whether there exists a rendezvous algorithm with bounded adversarial faults, working for arbitrary graphs, whose both cost and time are
polynomial in the size $n$ of the graph, and polylogarithmic
in the fault bound $c$ and in the smaller label $\ell$ .

\section{Conclusion}

We presented algorithms for rendezvous with delay faults under various distributions of faults. Since we assumed no knowledge of any bound on
the size of the graph, for unbounded adversarial faults rendezvous is impossible, even for the class of rings. Hence it is natural to ask how the
situation changes if a polynomial upper bound $m$ on the size of the graph is known to the agents. In this case, even under the harshest
model of unbounded adversarial faults, a simple rendezvous algorithm can be given. In fact this algorithm mimics the asynchronous rendezvous algorithm (without faults) from \cite{DGKKP}. 
An agent with label $\lambda$, starting at node $v$ of a graph of size at most $m$, repeats $(P(m)+1)^\lambda$ times the trajectory $R(m,v)$, which starts and ends at node $v$, 
and stops.
Indeed, in this case, the number of integral trajectories $R(m,v)$ performed by the agent with larger label is larger than the number of edge traversals by the other agent, and consequently, if they have not met before,  the larger agent must meet the smaller one after the smaller agent stops, because the larger agent will still perform at least one entire trajectory afterwards. The drawback of this algorithm is that, while its cost is
polynomial in $m$, it is exponential in the smaller
label $\ell$. We know from Theorem \ref{lb} that the cost of any rendezvous algorithm must be at least linear in $\ell$,
even for the two-node tree. Hence an interesting
open problem is:

\begin{quotation}
Does there exist a deterministic rendezvous algorithm, working in arbitrary graphs for unbounded adversarial faults, with cost polynomial in the size of the
graph and in the smaller label, if a polynomial upper bound on the size of the graph is known to the agents?
\end{quotation}

\bibliographystyle{plain}


\end{document}